\newtheorem{thm}{Theorem}[section]
\newtheorem{prop}[thm]{Proposition}
\theoremstyle{definition}
\newtheorem{defn}[thm]{Definition}
\theoremstyle{remark}
\numberwithin{equation}{section}
\newcommand{\BibTeX}{B\kern-0.1emi\kern-0.017emb\kern-0.15em\TeX}
\newcommand{\XYpic}{$\mathrm{X\kern-0.3em\raisebox{-0.18em}{Y}}$-$\mathrm{pic}\,$}
\newcommand{\cl}{C \kern -0.1em \ell}  %%Clifford algebra
\newcommand{\hypergeom}[5]{\mbox{$
_{#1} F_{#2}
\!\!\;
\Bigl(
\!\!\!\!
\begin{array}{c}
\multicolumn{1}{c}{\begin{array}{c}
#3
\end{array}}\\[1mm]
\multicolumn{1}{c}{\begin{array}{c}
#4
            \end{array}}\end{array}
\!\!\!\!
; \displaystyle{#5}\Bigr)
$}}
\newcommand{\hypergeomq}[6]{\mbox{$
_{#1} \phi _{#2}
\!\!\;
\Bigl(
\!\!\!\!
\begin{array}{c}
\multicolumn{1}{c}{\begin{array}{c}
#3
\end{array}}\\[1mm]
\multicolumn{1}{c}{\begin{array}{c}
#4
            \end{array}}\end{array}
\!\!\!\!
; \displaystyle{#5}
;\displaystyle{#6}\Bigr)
$}}
\begin{document}
\title{Classification of 2-Orthogonal Polynomials with Brenke Type Generating Functions}
\author{
\name{H. Chaggara\textsuperscript{a}\thanks{E-mail address: hshaggara@kku.edu.sa / hamza.chaggara@ipeim.rnu.tn} and A. Gahami\textsuperscript{b}\thanks{E-mail address: aelgahami@yahoo.fr} }
\affil{\textsuperscript{a}Mathematics
Department, College of Science, King Khalid University, Abha, Saudi Arabia.\\
\textsuperscript{b}Higher School of Sciences and Technology, Sousse University, Tunisia.}
}

\date{\today}

\maketitle
\begin{abstract}
The Brenke type generating functions are the polynomial generating
functions of the form
$$\sum_{n=0}^{\infty}{P_n(x )\over n!}t^n=A(t)B(xt),
$$
where $A$ and $B$ are two formal power series subject to the conditions $A(0)\;B^{(k)}(0)\neq0,\, k=0,1,2\ldots$.\\
In this work, we determine all Brenke-type polynomials when they are also $2$-orthogonal polynomial sets, that is to say, polynomials satisfying one standard four-term recurrence relation. That allows us, on one hand, to obtain new 2-orthogonal sequences generalizing known orthogonal families of polynomials, and on the other hand, to recover particular cases of polynomial sequences discovered in the context of $d$-orthogonality.\\
The classification is based on the resolution of a three-order difference equation (Relation (\ref{reccur 2 orthog (2)})) induced by the four-term recurrence relation satisfied by the considered polynomials.
This study is motivated by the
work of Chihara~\cite{chihara1968orthogonal} who gave all pairs $(A (t), B(t))$ for which $\{P_n(x)\}_{n\geq 0}$ is an orthogonal polynomial sequence.\\

\begin{keywords} 
Brenke polynomials; Difference equations; Generating functions; 2-Orthogonal polynomials.
\end{keywords}
\textit{\textbf{2010 MATHEMATICS SUBJECT CLASSIFICATIONS:}}
33C45; 39A70; 41A10; 41A58.
\end{abstract}
\section{Introduction}
Let $\mathcal{P}$ be the linear space of polynomials with complex coefficients and let $\mathcal{P'}$ be its algebraic dual. 
A polynomial sequence $\{P_n\}_{n\geq0}$ in $\mathcal{P}$ is called a {\it polynomial set} (PS), if $\deg P_n(x)=n$ for all $n$.
%We denote by $\langle u,f\rangle$ the effect of the linear functionnal $u\in\mathcal{P'}$ on the polynomial $f\in\mathcal{P}$.
By orthogonality of polynomials sequences, we mean orthogonality with respect to a linear functional in $\mathcal{P'}$~\cite{chihara2011introduction}.\\
{\it Multiple orthogonal polynomials} are a generalization of {\it standard orthogonal polynomials}. Their orthogonality
measures are spread across a vector of $d$ measures and they are polynomials on a single variable
depending on a multi-index $\textbf{n}=(n_0,n_1,\ldots,n_{r-1})$. There are two types of multiple orthogonal polynomials: Type I and Type II.
This notion has many applications in various fields of mathematics  as approximation theory, analytic number theory, spectral theory of operators, special functions theory, and vector and simultaneous Pad\'{e} approximants, approximants~\cite{van1987vector,de1985simultaneous,
van1987approximants}.
A Riemann-Hilbert approach to such polynomials appears in \cite{vanassche2001}. For further information about multiple orthogonal polynomials, we refer to~\cite{aptekarev1998multiple} and Chapter 23 written by Van Assche in Ismail's book~\cite{ismail2005classical}.

Our contribution deals with the notion of {\it $d$-orthogonality}, considered as an interesting subclass of type II multiple orthogonal polynomials whose multi-index lie on the step-line near the diagonal. This concept was introduced by Van Iseghem~\cite{van1987vector} and Maroni~\cite{maroni1989orthogonalite}, where linear recurrence relations of higher order are studied for sequences of monic polynomials in such a way a generalization of the standard orthogonality is given when $d=1$.
$d$-Orthogonality has received increasing interest from the pioneering work by P. Maroni~\cite{maroni1989orthogonalite}, where the algebraic properties of a sequence of orthogonal polynomials with respect to a $d$-dimensional vector of linear functionals are studied. Intensive work has been done from such a time in the analysis of the $d$-orthogonality counterpart of classical orthogonal polynomials. In particular, we must point out \cite{
cheikh2000classical,
cheikh2008some,
NBcheikh2014d,
boukhemis05,
BM,
boukhemis97,
douakcaracter,
Douakmaroni,
douak1996,
lamiri2013d,
cheikh2007dunkl,
cheikh2011characterization,
BenCheikhGam2}, as well as~\cite{
douak1992,
douakmaroni20,
douak99,
bencheikh00,
marcellan2021,
chaggara2018a}, where examples of classical $2$-orthogonal polynomials are studied from the perspective of the characterizations of the standard classical ones. New polynomial sets which contain classical orthogonal polynomials have been created so far.
Furthermore, the notion of $d$-orthogonality is related to the study of vectorial continued fractions, nonlinear automorphisms of the Weyl algebra~\cite{vinet2009automorphisms}, Darboux transformations in~\cite{marcellan2021}, birth and death process~\cite{zerouki2006}.
%, and infinite dynamic systems~\cite{bourreau2002polynomes}.

In this work, we will be interested in a particular class of polynomials, but nevertheless important. These polynomials are defined by their {\it Brenke type generating function (GF)} as follows \cite{chihara1968orthogonal}
\begin{equation}\label{form brenke}
A(t)B(xt)=\sum_{n=0}^{\infty}\frac{P_n(x)}{n!}t^n,
\end{equation}
where $A$ and $B$ are two formal power series satisfying:
\begin{equation}\label{expression A et B}
A(t)=\sum_{k=0}^{\infty}a_kt^k,\quad B(t)=\sum_{k=0}^{\infty}b_kt^k,\quad a_0b_k\neq0, \ k=0,1,\ldots.
\end{equation}
Thus the corresponding {\it Brenke polynomials} are
\begin{equation}\label{explicit Brenke}
P_n(x)=\sum_{k=0}^na_{n-k}b_kx^k,\ n=0,1,\ldots.
\end{equation}
In~\cite{chihara1968orthogonal}, Chihara made a complete classification of orthogonal polynomials with Brenke-type GF. The resulting classes consisted of seven families (known as Brenke-Chihara polynomials), among them, Laguerre, Szeg\"{o}-Chihara generalized Hermite, Wall, Al-Salam-Carlitz, Stieltjes-Wigert polynomials as well as a "new" set of orthogonal polynomials (Chihara PS as suggested in~\cite{askey2001}). The corresponding orthogonality measure, which has a mass on the whole real line and yet is not symmetric, was given explicitly by Chihara in~\cite{chihara1971orthogonality}. \\
The problem of determining orthogonal PS with Brenke type GF is a particular case of Geronimus problem who considered OPS of the form 
$$P_n(x)=a_nb_0+\sum_{k=1}^n a_{n-k}b_k\prod_{i=1}^k (x-x_i).$$
Geronimus obtained a set of necessary and sufficient conditions on $\{a_n\}_{n\geq 0},
 \{b_n\}_{n\geq 0}$ and $\{x_n\}_{n\geq 1}$ and exhibited a number of special cases explicitly. However, this problem has remained unsolved in its full generality. The Chihara problem consists of the particular case $x_k=0,\ k=1,2,\ldots$.\\
Brenke polynomials are a particular case of a larger class of polynomials known as \textit{generalized Appell polynomial sequences} or \textit{Boas-Buck polynomial sequences}~\cite{boas1958,ismail1974obtaining}. Brenke polynomials are also reduced to {\it Appell} and {\it $q$-Appell} ones when $B=\exp$ and $B=e_q$, respectively.
Brenke polynomials generated many well-known polynomials in the literature, namely 
monomials, Laguerre, Generalized Hermite,
Generalized Gould-Hopper, Appell-Dunkl, $d$-Hermite, $d$-Laguerre, Bernoulli, Euler, Al-Salam-Carlitz, $q$-Laguerre, discrete $q$-Hermite PS,\ldots. \\
Brenke polynomials appear in many areas of mathematics such as approximation theory, summability, and differential equations~\cite{ansari2019approximation}. 
Furthermore, they play a central role in the determination of all MRM-triples associated with Brenke-type GF~\cite{asai2013brenke}. 
Sz\'{a}sz operators are extensions of Bernstein operators to infinite intervals. They play a very important role in the field of approximation theory~\cite{varma12}. Brenke polynomials provide a crucial tool to extend Sz\'{a}sz operators~\cite{varma12,braha2020some}. 
For instance, a Chlodowsky type generalization of Sz\'{a}sz operators defined by means of Brenke type polynomials are investigated in~\cite{mursaleen15}. 
Moreover, an approximation by Durrmeyer type Jakimovski–Leviatan operators involving Brenke type polynomials is established in~\cite{wani21}. 

The notion of orthogonal polynomials relative to a regular linear functional has been broadened as a natural extension to the so-called $d$-orthogonal polynomials ($d$ being a non-negative integer) by Van Iseghem~\cite{van1987vector} and Maroni~\cite{maroni1989orthogonalite}. These are polynomials in one variable satisfying a specific orthogonality relations
with respect to $d$ different linear functionals. \\
Any $\{P_n\}_{n\geq 0}$ is said to be $d$-orthogonal polynomial set ($d-$OPS) if there exists a $d$-dimensional vector of linear functionals,
$\mathbb{U}=(u_0,\ldots,u_{d-1})$ such that:
\begin{equation}\label{eqdops}
\begin{cases} \langle u_k, x^m P_n\rangle =0\quad \textrm{if}\;\; n> md+k,\ m=0,1,\ldots,\\
\langle u_k, x^m P_{md+k}\rangle \neq 0 \quad\; m=0,1,\ldots;\;k=0,1,\ldots,d-1,
\end{cases}
\end{equation}
 where $\langle u,f\rangle$ denotes the effect of a linear functional $u\in \mathcal{P}^\prime $ on the polynomial $f\in\mathcal{P}$.\\ 
A relevant characterization of $d$-OPS by a $(d+1)$-order recurrence relation was given by Van Iseghem~\cite{van1987vector} and Maroni~\cite{maroni1989orthogonalite}.
\begin{equation}\label{recuurence orthogonality}
xP_n(x)=\sum_{k=-1}^d\gamma_k(n)P_{n-k}(x),\quad \gamma_k(n)\in\mathbb{C},\ k,n=0,1,2,\ldots,
\end{equation}
with regularity condition  $\gamma_d(n)\cdot\gamma_{-1}(n)\neq0$, and the convention $P_{-k}=0,\; k=1,2,\ldots$.\\
For the particular case $d=1$, this result is reduced to the well-known Shohat-Favard Theorem (or Spectral Theorem for OPS)~\cite{chihara2011introduction}.\\
Most of the known $d$-orthogonal families were introduced either as already known families of polynomials satisfying higher order recurrence relation (Gould-Hopper, Humbert polynomials,\ldots) or as solutions of characterization problems. Such problems consist in finding all $d$-OPS having a given property namely, $d$-OPS of Appell type~\cite{cheikh2007dunkl, cheikh2011characterization,douak1996}, $q$-Appell type~\cite{zaghouani2005some},
Specific $d$-OPS of Sheffer type~\cite{bencheikh2007},
$2$-OPS of Sheffer type~\cite{BM,chaggara2018a}, $d$-OPS of Sheffer type~\cite{BenCheikh-Gam, chaggara2018a, varma2018} and $(d+1)$-fold symmetric $d$-OPS of Sheffer type~\cite{chaggara2018a}, classical and semi-classical character~\cite{BM,saib2013semi}, specific hypergeometric form~\cite{cheikh2008some,lamiri2013d},\ldots\\
The characterization of $d$-OPS with Brenke-type GF was considered by many authors. The special cases, $B =~ \exp,\ A = \exp,\ B=e_q,$ and $B=\exp_\mu$: the generalized exponential function, were investigated, respectively, by Douak \cite{douak1996}, Ben Cheikh and Douak~\cite{bencheikh2000,cheikh2000classical}, Zaghouani~\cite{zaghouani2005some} and Ben Cheikh and Gaied~\cite{cheikh2007dunkl}. Furthermore, the $(m+1)$-fold symmetry property was studied in~\cite{NBcheikh2014d}.
Recently, we characterized all $d$-orthogonal polynomial sets of Brenke type~\cite{gahami23}. We obtained several new and known results. We give some examples as illustrations. Our main result was the following characterization theorem 
\begin{thm}\label{th}\cite[Theorem]{gahami23}
Let $\{P_n\}_{n\geq0}$ be a PS of Brenke type generated by (\ref{form brenke}) then, $\{P_n\}_{n\geq 0}$ is a $d$-OPS if and only if
\begin{equation}\label{recurrence relation}
    \sum_{i=0}^m\Bigl(\sum_{j=i}^ma_{j+1}\widehat{a}_{m-j}\Bigr)\Delta_{n-i}=0,\quad m=n,n+1,\ldots, d+1,
\end{equation}
with the regularity condition
\begin{equation}\label{regularity condition}
   \sum_{i=0}^d\Bigl(\sum_{j=i}^da_{j+1}\widehat{a}_{d-j}\Bigr)\Delta_{n-i}\neq0,
\end{equation}
where $\widehat{A}(t)=\frac{1}{A(t)}=\sum_{k=0}^\infty\widehat{a}_kt^k$ and $\Delta_n=r_n-r_{n-1},\ r_n=\frac{b_n}{b_{n+1}}$ and $r_{-1}=0$.
\end{thm}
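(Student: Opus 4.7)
My plan is to reduce everything to the Van Iseghem--Maroni $(d+1)$-order recurrence relation (\ref{recuurence orthogonality}) and perform a coefficient-by-coefficient comparison against the explicit form (\ref{explicit Brenke}). Concretely, $\{P_n\}$ is a $d$-OPS iff there exist scalars $\gamma_{-1}(n),\gamma_0(n),\dots,\gamma_d(n)$, with $\gamma_{-1}(n)\gamma_d(n)\neq 0$, such that
\[ xP_n(x)=\sum_{k=-1}^{d}\gamma_k(n)P_{n-k}(x). \]
Substituting $P_n(x)=\sum_{k=0}^n a_{n-k}b_k x^k$ on both sides and extracting the coefficient of $x^{n+1-m}$ for $m=0,1,\dots,n+1$ yields a system of linear equations in the unknowns $\gamma_i(n)$.

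The top coefficient ($m=0$) immediately gives $\gamma_{-1}(n)=b_n/b_{n+1}=r_n$. Substituting this back and using the telescoping identity $r_{n-m-1}-r_n=-\sum_{i=0}^m\Delta_{n-i}$, the remaining equations (after dividing by $b_{n-m}$) take the convolution form
\[ \sum_{i=0}^{\min(m,d)}\gamma_i(n)\,a_{m-i} \;=\; a_{m+1}\sum_{i=0}^m \Delta_{n-i},\qquad m=0,1,\dots,n-1. \]
For $m=0,1,\dots,d$ this is a lower-triangular system that uniquely determines $\gamma_0(n),\dots,\gamma_d(n)$. Inverting the convolution via the coefficients $\widehat{a}_k$ of $\widehat{A}(t)=1/A(t)$ (so that $\sum_{j}a_j\widehat{a}_{m-j}=\delta_{m,0}$) and swapping the order of summation gives
\[ \gamma_m(n)=\sum_{i=0}^{m}\Bigl(\sum_{j=i}^{m}a_{j+1}\widehat{a}_{m-j}\Bigr)\Delta_{n-i}, \qquad 0\le m\le d. \]
In particular the regularity condition $\gamma_d(n)\neq 0$ is precisely (\ref{regularity condition}).

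For $m\geq d+1$ (and $m\le n-1$) the same equations are \emph{overdetermined}: no new $\gamma$'s are available, so the relation must hold with $\gamma_i(n)=0$ for $i>d$. Applying the same convolution inversion to these residual equations shows that their consistency is equivalent to the vanishing of the very same expression, giving exactly (\ref{recurrence relation}). Conversely, if (\ref{recurrence relation}) and (\ref{regularity condition}) hold, one defines $\gamma_{-1}(n)=r_n$ and $\gamma_0(n),\dots,\gamma_d(n)$ by the explicit formula above, and the coefficient comparison runs in reverse to verify the four-term (more generally $(d+2)$-term) recurrence; by Van Iseghem--Maroni this forces $d$-orthogonality.

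The main obstacle is bookkeeping: one must keep the double indexing (in $i$ and in $j$, coming from the telescoping of $\Delta$ and from the Cauchy product with $\widehat{A}$) aligned so that the expression derived for $\gamma_m(n)$ matches the bracketed sum $\sum_{j=i}^{m}a_{j+1}\widehat{a}_{m-j}$. Once this identification is made, both the necessity and sufficiency follow by symmetric arguments, and the boundary cases $m\in\{0,d\}$ recover $\gamma_{-1}(n)=r_n$ and the regularity condition without extra effort.
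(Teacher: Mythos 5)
The paper does not prove Theorem \ref{th} at all: it is imported verbatim from \cite{gahami23}, so there is no in-text argument to compare yours against. On its own merits your proposal is correct and is surely the intended derivation: equating coefficients of $x^{n+1-m}$ in (\ref{recuurence orthogonality}) for the explicit form (\ref{explicit Brenke}) gives $\gamma_{-1}(n)=b_n/b_{n+1}=r_n$ from the top coefficient, and for $0\le m\le n$ the convolution system $\sum_{k=0}^{\min(m,d)}\gamma_k(n)a_{m-k}=a_{m+1}\bigl(r_{n-m-1}-r_n\bigr)=-a_{m+1}\sum_{i=0}^{m}\Delta_{n-i}$, whose inversion by $\widehat{A}=1/A$ produces exactly the bracketed sums in (\ref{recurrence relation})--(\ref{regularity condition}); one can check against the paper's own $d=2$ specialization (\ref{regularity condition1}) that the resulting $\gamma_m(n)$ agrees with $\sum_{i=0}^m(\sum_{j=i}^m a_{j+1}\widehat{a}_{m-j})\Delta_{n-i}$ up to an overall sign that is immaterial to the vanishing/non-vanishing conditions. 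Three small repairs: (i) you dropped that minus sign when passing from the telescoping identity to your displayed convolution equation; (ii) the constant-term equation ($m=n$) should be included rather than stopping at $m\le n-1$ --- it fits the same formula precisely because of the convention $r_{-1}=0$, which is where that convention is actually used; (iii) in the overdetermined block $m\ge d+1$, the claim that consistency is \emph{equivalent} to the vanishing of the $\widehat{a}$-transformed expressions needs a one-line induction on $m$ (using $\widehat{a}_0=1$ and the already-settled equations for $l\le d$) to invert back, since a priori the transform only gives one implication. With these points made explicit, the equivalence combined with the Van Iseghem--Maroni characterization yields the statement.
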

Our purpose in this work is to give explicitly all 2-OPS of Brenke type.
Such a classification takes into account the fact that PS which is obtainable
from one another by an affine change of the variable or by an alternative normalization are considered to be
equivalent.\\
Our main tool is the recurrence relation (\ref{recurrence relation}), the GF (\ref{form brenke}), and the resolution of the difference equation~(\ref{reccur 2 orthog (2)}.
Some polynomial sequences among the obtained families represent a generalization of these obtained by Chihara and either are particular cases of some known $d$-OPS of Brenke type or appear to be as new PS.\\
Before discussing the above problem, let us recall some definitions which we need below.\\
%\begin{equation*}
 %   [n]_q=\frac{1-q^n}{1-q},\quad[n]_q!=[1]_q\ldots[n]_q\quad[0]_q=1
%\end{equation*}
%The complex number $j$ is defined by
%$$j=\exp(\frac{2i\pi}{3})$$
The Pochhammer symbol $\{(x)_n\}_{n\geq0}$ and the $q$-shifted factorial  $\{(x;q)_n\}_{n\geq0}$ are defined as follows:
 $$(x)_n= \prod_{k=0}^{n-1}(x+k) 
              \ \textrm{and}\ (x;q)_n=
               \prod_{k=0}^{n-1}(1-xq^k),\ n=1,2,\ldots,\ (x)_0=(x;q)_0=1.$$
%where the empty product is interpreted to be 1.\\
The standard notion for hypergeometric function is
\begin{equation*}
\hypergeom{p}{q}{(a_p)}{(b_q)}{t}=\sum_{n=0}^\infty\frac{(a_1)_n\ldots(a_p)_n}{(b_1)_n\ldots(b_q)_n}\frac{t^n}{n!},
\end{equation*}
where the contracted notation $(a_p)$ is used to abbreviate the array of $p$ parameters $a_1,\ldots , a_p$.\\
The $q$-hypergeometric series are defined by
\begin{equation*}
\hypergeomq{r}{s}{(a_r)}{(b_s)}{q}{t}
=\sum_{n=0}^\infty\frac{(a_1;q)_n\cdots (a_r;q)_n}{(b_1;q)_n\cdots(b_s;q)_n}
\frac{((-1)^nq^{n\choose2})^{1+s-r}}{(q;q)_n}t^n,
\end{equation*}
%where $
 %   (a_1,a_2,\ldots,a_k;q)_n=(a_1;q)_n(a_2;q)_n\ldots(a_k;q)_n.
%$
An elementary $q$-hypergeometric series is the $q$-exponential series defined by
\begin{equation*}
 e_q(t)=\frac{1}{(t;q)_\infty}=\,_1\phi_0(_{-}^{0};q,t),\ (t;q)_\infty=\prod_{k=0}^{\infty}(1-tq^k),\ 0<q<1.
\end{equation*}
%=\sum_{n=0}^\infty\frac{t^n}{(q;q)_n}
The following relation holds
\begin{equation*}
(-t;q)_\infty=\frac{1}{e_q(-t)}=\,_0\phi_0(_{-}^{-};q,-t).
\end{equation*}
%=\sum_{n=0}^\infty\frac{q^{n\choose2}}{(q;q)_n}t^n

\section{Some Results About Characterization of Brenke Type $d$-OPS}
In this section, we collect the essential results, other than the Theorem stated in Section 1, that allow us to classify all 2-OPS with Brenke-type GF. 
Without loss of the generality, we can suppose in (\ref{expression A et B}) that $a_0=b_0=1$.\\ 
A straightforward and interesting consequence of the above theorem is the following statement:
\begin{prop}\label{cor d-orth}\cite[Corollary]{gahami23}
A necessary condition for the $d$-orthogonality of $\{P_n\}_{n\geq0}$ is obtained
 when we take $m=d+1$ in (\ref{recurrence relation})
\begin{equation}\label{d+1 ordre reccu}
   \Bigl(\sum_{j=0}^{d+1}a_{j+1}\widehat{a}_{d+1-j}\Bigr)\Delta_n+ \Bigl(\sum_{j=1}^{d+1}a_{j+1}\widehat{a}_{d+1-j}\Bigr)\Delta_{n-1}+\cdots+a_{d+2}
   \Delta_{n-d-1}=0.
\end{equation}
\end{prop}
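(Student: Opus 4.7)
The plan is to derive Proposition~\ref{cor d-orth} as a direct specialization of Theorem~\ref{th}. Concretely, the recurrence~(\ref{recurrence relation}) guaranteed by $d$-orthogonality must hold at every admissible value of the parameter $m$, and the idea is simply to substitute the smallest such value, $m=d+1$. Doing so yields
\[
\sum_{i=0}^{d+1}\Bigl(\sum_{j=i}^{d+1}a_{j+1}\widehat{a}_{d+1-j}\Bigr)\Delta_{n-i}=0,
\]
and expanding the outer sum over $i=0,1,\ldots,d+1$ reproduces, term by term, the left-hand side of~(\ref{d+1 ordre reccu}).

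The only bookkeeping point worth flagging is the last coefficient, corresponding to $i=d+1$: here the inner sum collapses to the single contribution $a_{d+2}\widehat{a}_0$. The normalization $a_0=1$ adopted at the beginning of Section~2 forces $\widehat{a}_0=1/a_0=1$, so this coefficient reduces cleanly to $a_{d+2}\Delta_{n-d-1}$, which matches the last summand of~(\ref{d+1 ordre reccu}). No genuine obstacle is anticipated: the whole content has already been absorbed into Theorem~\ref{th}, and the proposition merely isolates the lowest-index identity in the hierarchy of constraints. This is exactly the form that will be exploited later, when the parameter is set to $d=2$ and the resulting three-term difference equation on the sequence $\{\Delta_n\}$ drives the classification of 2-orthogonal Brenke polynomials.
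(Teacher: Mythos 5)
Your proposal is correct and coincides with the paper's own (implicit) argument: the proposition is stated there as an immediate consequence of Theorem~\ref{th}, obtained by setting $m=d+1$ in~(\ref{recurrence relation}) and using $\widehat{a}_0=1/a_0=1$ to reduce the last coefficient to $a_{d+2}$, exactly as you do. The only cosmetic remark is that whether $m=d+1$ is the ``smallest'' admissible value depends on how one reads the range $m=n,n+1,\ldots,d+1$ in~(\ref{recurrence relation}); the substitution itself is unaffected.
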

Using (\ref{recuurence orthogonality}) and the above theorem, we have the following 
\begin{prop}\label{proposition regularity}
If $\{P_n\}_{n\geq 0}$ is a $d$-OPS, then $(d+1)$ consecutive numbers $\Delta_n$ cannot be zero and $(d+1)$ consecutive coefficients $a_n$ cannot be zero.
\end{prop}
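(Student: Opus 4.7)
The plan is to treat the two claims separately: the $\Delta_n$ part follows immediately from the regularity condition (\ref{regularity condition}) of Theorem~\ref{th}, while the $a_n$ part is obtained by evaluating the four-term recurrence (\ref{recuurence orthogonality}) at $x=0$ and invoking the regularity $\gamma_d(n)\neq 0$ of the $d$-orthogonal recurrence.

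For the first assertion I would argue by contradiction: suppose $\Delta_{n_0}=\Delta_{n_0+1}=\cdots=\Delta_{n_0+d}=0$ for some $n_0$. Substituting $n=n_0+d$ in (\ref{regularity condition}), the indices $n-i$ for $i=0,\ldots,d$ run through $n_0+d,n_0+d-1,\ldots,n_0$, so every $\Delta_{n-i}$ vanishes and the left-hand side of (\ref{regularity condition}) equals $0$, contradicting the strict inequality. This part is essentially a one-line observation.

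For the second assertion, assume $a_{n_0}=a_{n_0+1}=\cdots=a_{n_0+d}=0$. Since $a_0=1$ one has $n_0\geq 1$, and by sliding $n_0$ downward (which only extends the block of zeros) we may assume $a_{n_0-1}\neq 0$ — either $n_0$ has been reduced to $1$ so that $a_{n_0-1}=a_0=1$, or the reduction terminates earlier with $a_{n_0-1}\neq 0$. Evaluating (\ref{recuurence orthogonality}) at $x=0$ and $n=n_0+d-1$, and using $P_m(0)=a_mb_0=a_m$ read off from (\ref{explicit Brenke}) together with the normalization $b_0=1$, one obtains
$$0=\sum_{k=-1}^{d}\gamma_k(n_0+d-1)\,a_{n_0+d-1-k}.$$
The indices $n_0+d-1-k$ for $k=-1,0,\ldots,d-1$ all lie in the block $\{n_0,\ldots,n_0+d\}$ where $a$ vanishes, so only the $k=d$ term survives and yields $\gamma_d(n_0+d-1)\,a_{n_0-1}=0$. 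Since $a_{n_0-1}\neq 0$, this forces $\gamma_d(n_0+d-1)=0$, contradicting the regularity condition $\gamma_d(n)\cdot\gamma_{-1}(n)\neq 0$ stated below (\ref{recuurence orthogonality}). The only mildly delicate point is the minimality reduction to $a_{n_0-1}\neq 0$; once that is in place, each half of the proposition is a single explicit computation.
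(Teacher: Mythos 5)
Your proof is correct and follows essentially the same route as the paper: the $\Delta_n$ claim is read off directly from the regularity condition (\ref{regularity condition}), and the $a_n$ claim is extracted from the recurrence (\ref{recuurence orthogonality}) combined with $\gamma_d(n)\neq 0$. The only cosmetic difference is that you specialize to the constant coefficient (evaluation at $x=0$) and make the minimality reduction to $a_{n_0-1}\neq 0$ explicit, whereas the paper equates coefficients of a general power $x^m$ and leaves that final descent implicit.
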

\begin{proof}
From (\ref{regularity condition}) we deduce immediately that (d+1) consecutive numbers of $\Delta_n$ cannot be zero.\\
On the other hand, equating the coefficients of $x^m$ in both sides in (\ref{recuurence orthogonality}), we obtain
\begin{equation*}
\gamma_d(n)(n-d)!a_{n-m-d}=n!a_{n-m+1}(\frac{b_{m-1}}{b_m}-(n+1)\gamma_{-1}(n))-\sum_{k=0}^{d-1}\gamma_k(n)(n-k)!a_{n-m-k}.
\end{equation*}  
Since $\gamma_d(k)\neq0$, we deduce that $(d+1)$ consecutive numbers of $a_n$ cannot be zero.
\end{proof}
An extension of the notion of the symmetric polynomials is the so-called $(m+1)-$fold symmetric polynomials defined as follows: 
\begin{defn}
A polynomial sequence $\{P_n\}_{n\geq0}$ is called $(m+1)$-fold symmetric polynomial set if it fulfills 
\begin{equation}\label{def d_symetric}
    P_n(w_mx)=w_m^nP_n(x),\quad w_m=\exp(\frac{2i\pi}{m+1}),\ n=0,1,2\ldots.
\end{equation}
\end{defn}
We denote that if $\{P_n\}_{n\geq 0}$ is a $PS$ of Brenke type generated by (\ref{form brenke}), $\{P_n\}_{n\geq 0}$ is $m+1$-fold symmetric if and only if $A(t)=A(w_mt)$ and consequently the GF (\ref{form brenke}) can be written as 
\begin{equation}\label{generating func d-symm}
G(x,t)=A_1(t^{m+1})B(xt),
\end{equation}
where $A_1(t)=\sum_{n=0}^\infty a_{1,n}t^n$ and $a_{1,0}=1$. 
\begin{prop}\label{cor recuu d-symmet}\cite[Corollary 4.3]{gahami23}
Let $\{P_n\}_{n\geq 0}$ be a $d$-OPS of Brenke type generated by (\ref{form brenke}). Then
$\{P_n\}_{n\geq 0}$ is $(m+1)$-fold symmetric if and only if $a_1=\cdots =a_m=0$
\end{prop}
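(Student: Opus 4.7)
This direction is immediate from the remark preceding the proposition: $(m+1)$-fold symmetry is equivalent to $A(t)=A(w_m t)$, i.e.\ $a_k(1-w_m^k)=0$ for every $k\ge 1$; since $w_m^k\neq 1$ when $1\le k\le m$, this forces $a_1=\cdots=a_m=0$.

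\textbf{Sufficiency.} Assume $a_1=\cdots=a_m=0$. My goal is to upgrade this to $a_k=0$ for every $k\ge 1$ with $k\not\equiv 0\pmod{m+1}$, since by the same remark this is equivalent to $A(t)=A(w_m t)$ and hence to $(m+1)$-fold symmetry. The plan has two ingredients. First, from the Cauchy-product identity $\sum_{j=0}^k a_j\widehat a_{k-j}=\delta_{k,0}$ (coming from $A\widehat A=1$) together with $a_0=\widehat a_0=1$, I obtain inductively $\widehat a_1=\cdots=\widehat a_m=0$. Second, I would argue by strong induction on $k$, ranging over indices not divisible by $m+1$, that $a_k=0$. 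The base cases $1\le k\le m$ are the hypothesis. In the inductive step I substitute into the recurrence (\ref{recurrence relation}) of Theorem \ref{th},
\[
\sum_{i=0}^{\mu}\Bigl(\sum_{j=i}^{\mu}a_{j+1}\widehat a_{\mu-j}\Bigr)\Delta_{n-i}=0,
\]
for a well-chosen $\mu\in\{n,\dots,d+1\}$: the induction hypothesis kills almost all of the products $a_{j+1}\widehat a_{\mu-j}$, and the equation collapses to one in which $a_k$ appears linearly, multiplied by a definite combination of consecutive $\Delta_{n-i}$'s. The regularity condition (\ref{regularity condition}), together with Proposition \ref{proposition regularity} (which in particular forces $m\le d$, so that the block of $m$ consecutive zeros $a_1,\dots,a_m$ is consistent with $d$-orthogonality), then rules out the vanishing of that combination and yields $a_k=0$.

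\textbf{Principal difficulty.} The delicate part is the combinatorial bookkeeping in the inductive step: identifying the pair $(n,\mu)$ so that the only nonzero contributions to $\sum_{j=i}^{\mu}a_{j+1}\widehat a_{\mu-j}$ involve either $a_k$ or the already-vanishing coefficients, and then certifying, via (\ref{regularity condition}), that the resulting scalar multiplier of $a_k$ is nonzero. Boundary cases for small $n$ and the propagation of the induction from one block of length $m+1$ to the next will require some care.
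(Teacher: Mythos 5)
The paper offers no proof of this proposition: it is imported verbatim from \cite{gahami23} (as ``Corollary 4.3''), so there is nothing to compare your argument against step by step. Your necessity argument is correct and is exactly the content of the remark preceding the statement (symmetry forces $a_k(1-w_m^k)=0$, hence $a_k=0$ for $1\le k\le m$); note it does not even use $d$-orthogonality. The trouble is the sufficiency direction: what you have written there is a plan, not a proof. The ``combinatorial bookkeeping'' you defer is the entire mathematical content of the implication --- you never exhibit the pair $(n,\mu)$ that isolates $a_k$, nor verify that the resulting scalar multiplier of $a_k$ is nonzero, and these are precisely the points where such arguments succeed or fail.

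More seriously, the implication you are trying to prove is false in the generality in which you (and the literal statement) pose it, namely for arbitrary $m\le d$. Case $\mathcal{A}2$ of this very paper exhibits the Hermite-type $2$-OPS with generating function (\ref{exp A a_1=0 cas1}), $G(x,t)=\exp(a_2t^2+a_3t^3)\exp(xt/\alpha)$, whose regularity condition forces $a_3\neq0$: here $a_1=0$ but the sequence is not $2$-fold symmetric, so ``$a_1=\cdots=a_m=0\Rightarrow(m+1)$-fold symmetric'' fails for $m=1<d=2$. The proposition is applied in the paper only with $m=d$ (Section 3.2, $m=d=2$), which is the only case in which sufficiency can hold; any correct induction must therefore use the equality $m=d$ in an essential, quantitative way --- through the regularity condition (\ref{regularity condition}) at order exactly $d$ --- and not merely the inequality $m\le d$ that you extract from Proposition \ref{proposition regularity}. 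As described, your inductive scheme makes no such use of $m=d$ and would ``prove'' the false $m<d$ cases as well, which shows the step cannot close in the form you outline.
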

In ~\cite{NBcheikh2014d}, Ben Cheikh and Ben Romdhane investigated $(d+1)$-fold symmetric, $d-$OPS of Brenke type. They showed that there exist exactly two classes of families of this kind.
\begin{prop}\cite[Theorem 2.1]{NBcheikh2014d}\label{corllary carac d-orth d-symm}
Let $\{P_n\}_{n\geq0}$ be a $(m+1)$-fold symmetric polynomials set generated by (\ref{generating func d-symm})
then $\{P_n\}_{n\geq0}$ is $d-$orthogonal if and only if $A_1(t)$ and $B(t)$ are one of the two possibilities 
  $$\bullet\ G_1(x,t)=\sum_{u=0}^d\frac{\exp(a_{1,1}t)(xt)^u}{\prod_{i=0}^{u-1}r_i}
  \hypergeom{0}{d}{-}{{r_0\over v_0}+1,\ldots,{r_{u-1}\over v_{u-1}}+1,{r_u\over v_u},\ldots,{r_{d-1}\over v_{d-1}}}{\frac{(xt)^{d+1}}{v_0\cdots v_d}},$$
  where $v_i\neq0,\ i=0,\ldots,d;\ \frac{r_i}{v_i}\neq -1,-2,\ldots;\ i=0,\ldots,(d-1),\ \prod_{i=0}^{-1}=1$,
and 
  $$\bullet\ G_2(x,t)=\sum_{u=0}^d\frac{e_q(a_{1,1}(1-q)t)(xt)^u}{\prod_{i=0}^{u-1}s_i(t_i-1)}
 \hypergeomq{0}{d}{-}{{qt_0,\ldots,qt_{u-1},t_u,\ldots t_{d-1}}}{q}{\frac{q^{u+1}(xt)^{d+1}}{s_0\cdots s_d}}, $$
where $s_i\neq0;\ i=0\ldots,d$, and $t_i\neq q^{-k},\ k\in\mathbb{N},\,i=0,\ldots, (d-1)$.
\end{prop}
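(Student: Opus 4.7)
The plan is to exploit the $(d+1)$-fold symmetry to reduce the $d$-orthogonality conditions from Theorem \ref{th} to a tractable difference equation for the coefficients encoding $A$ and $B$. By Proposition \ref{cor recuu d-symmet} the $(d+1)$-fold symmetry forces $a_1=\cdots =a_d=0$, so $A(t)=A_1(t^{d+1})$, and therefore $\widehat{A}(t)=1/A(t)$ also depends only on $t^{d+1}$: only indices that are multiples of $d+1$ contribute nonzero coefficients to both $A$ and $\widehat{A}$. Substituted into the $d$-orthogonality system (\ref{recurrence relation}) and the regularity condition (\ref{regularity condition}), this collapses a large number of terms: in the product $a_{j+1}\widehat{a}_{m-j}$ only pairs with both $j+1$ and $m-j$ multiples of $d+1$ survive.

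Next, I would translate the surviving identities into a linear difference equation for the sequence $\{\Delta_n\}$ (equivalently for $\{r_n\}$, and then for $\{b_n\}$ via $r_n=b_n/b_{n+1}$). After isolating the shift by $d+1$, the equation has characteristic order two, and two fundamentally different solution families arise depending on whether the associated characteristic polynomial has a repeated root or two distinct roots. Equivalently, one may parameterize: either $r_n$ grows arithmetically in $n$, producing the ordinary exponential factor $\exp(a_{1,1}t)$ appearing in $G_1$, or $r_n$ grows geometrically with ratio $q$, producing the $q$-exponential $e_q(a_{1,1}(1-q)t)$ appearing in $G_2$.

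Once $b_n$ has been parameterized in each case, the generating function $A_1(t^{d+1})B(xt)$ can be rearranged by splitting the sum $\sum P_n(x)t^n/n!$ according to the residue $u$ of $n$ modulo $d+1$ (which is natural here because the $(d+1)$-fold symmetry organizes the coefficients precisely by this residue). In each residue class the inner series becomes identifiable as a $_0F_d$ hypergeometric in case (i) and as a $_0\phi_d$ $q$-hypergeometric in case (ii), yielding exactly the explicit forms $G_1$ and $G_2$ stated in the proposition, with the free parameters $v_i, r_i$ and $s_i, t_i$ arising from the coefficient ratios in the closed-form $b_n$. Sufficiency in each case reduces to verifying directly that the stated $G_1$ or $G_2$ satisfies (\ref{recurrence relation}), which is a bookkeeping check using the standard $\Gamma$- and $q$-Pochhammer identities.

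The main obstacle is the middle step: correctly identifying and then solving the difference equation for $\{b_n\}$, and showing that the regularity condition (\ref{regularity condition}) together with the structure of the characteristic equation forces exactly the dichotomy between arithmetic and geometric growth of $r_n$, with no spurious degenerate branches. The subsequent repackaging of each residue-class sum into a generalized (or $q$-generalized) hypergeometric series is essentially a manipulation with the Pochhammer and $q$-shifted factorials recalled at the end of Section 1.
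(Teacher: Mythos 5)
First, note that the paper itself contains no proof of this statement: it is imported verbatim as \cite[Theorem 2.1]{NBcheikh2014d} and used as a black box (e.g.\ in Section 3.2), so there is no in-paper argument to measure yours against. On its own terms, your outline follows the natural strategy, which is also the one the cited reference essentially uses: reduce via Proposition \ref{cor recuu d-symmet} (with $m=d$, which you silently and correctly assume) to $A(t)=A_1(t^{d+1})$, feed this into the characterization (\ref{recurrence relation})--(\ref{regularity condition}) of Theorem \ref{th}, solve the resulting difference equation for $r_n=b_n/b_{n+1}$, and reassemble the generating function by residue classes modulo $d+1$. This runs exactly parallel to how Section 3.3 of the paper treats the non-symmetric cases, so the approach is sound.

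However, as a proof it stops precisely at the step that carries all the content, and that step is more delicate than the outline admits. Since $\widehat{a}_1=\cdots=\widehat{a}_d=0$ as well, every inner sum in (\ref{recurrence relation}) with $m=d+1$ contains the factor $\widehat{a}_1=0$ (the only admissible index is $j=d$), so the headline necessary condition (\ref{d+1 ordre reccu}) degenerates to $0=0$ in the symmetric case; the first informative relations occur at $m=2d+1$ and $m=2d+2$, and you never identify which relations actually carry information. Carrying this out, the $m=2d+1$ relation reads
$$(a_{2d+2}-a_{d+1}^2)\,(r_n-r_{n-d-1})+a_{2d+2}\,(r_{n-d-1}-r_{n-2d-2})=0,$$
whose characteristic roots in the step-$(d+1)$ shift are $1$ and $\rho=-a_{2d+2}/(a_{2d+2}-a_{d+1}^2)$; the dichotomy you announce is $\rho=1$ (arithmetic $r_n$, the ${}_0F_d$ family) versus $\rho\neq1$ (the basic family), and there is in addition the branch $a_{2d+2}=a_{d+1}^2$, where the leading coefficient vanishes and which must be excluded using Proposition \ref{proposition regularity} and the regularity condition. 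None of this elimination, nor the determination of $A_1$ (obtained only by substituting the solved $\Delta_n$ back into (\ref{recurrence relation}) to get $A'/A$ or $A(t)/A(qt)$), nor the check that the stated parameter restrictions are exactly what (\ref{regularity condition}) requires for the sufficiency direction, is actually performed; you defer it all to ``the main obstacle''. So what you have is a correct roadmap consistent with the paper's machinery, not yet a proof.
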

\section{Classification of 2-orthogonal polynomials set of Brenke type}
\subsection{Necessary conditions}
Let $\{P_n\}_{n\geq 0}$ be a PS generated by (\ref{corllary carac d-orth d-symm}). Taking $d=2$ in the Proposition~\ref{cor d-orth},\\ we obtain the following necessary condition ensuring the 2-orthogonality of $\{P_n\}_{n\geq 0}$.
\begin{equation}\label{reccur orthog d=2 (1)}
    (a_1\widehat{a}_3+a_2\widehat{a}_2+a_3\widehat{a}_1+a_4)\Delta_n+(a_2\widehat{a}_2+a_3\widehat{a}_1+a_4)\Delta_{n-1}+(a_3\widehat{a}_1+a_4)\Delta_{n-2}+a_4\Delta_{n-3}=0,
\end{equation}
with the regularity condition 
\begin{equation}\label{regularity condition1}
(a_1\widehat{a}_2+a_2\widehat{a}_1+a_3)\Delta_n+(a_2\widehat{a}_1+a_3)\Delta_{n-1}+a_3\Delta_{n-2}\neq0,\ n=3,4,\ldots.
\end{equation}
Taking into account that $\sum_{k=0}^n\widehat{a}_ka_{n-k}=0,\ n=1,2,\ldots$, relations (\ref{reccur orthog d=2 (1)}) and (\ref{regularity condition1}) become respectively:
\begin{eqnarray}\label{reccur 2 orthog (2)}
    (-a_1^4-2a_1a_3+3a_1^2a_2-a_2^2+a_4)\Delta_n+(a_1^2a_2-a_2^2-a_1a_3+a_4)\Delta_{n-1}\nonumber\\
    +(a_4-a_1a_3)\Delta_{n-2}+a_4\Delta_{n-3}=0,\  n=3,4,\ldots\\
\label{regularity condition 2}
    (a_1^3-2a_1a_2+a_3)\Delta_n+(a_3-a_1a_2)\Delta_{n-1}+a_3\Delta_{n-2}\neq0,\ n=3,4,\ldots
\end{eqnarray}
\subsection{3-fold symmetric polynomials sequence}
According to Proposition~\ref{cor recuu d-symmet}, $\{P_n\}_{n\geq 0}$ is a 3-fold symmetric PS if and only if \\ $(a_1,a_2)=(0,0)$.
By virtue of Proposition~\ref{corllary carac d-orth d-symm}, we get the  following two GF
\begin{eqnarray}
G_1(x,t)&=&e^{at^3}\Bigl[\hypergeom{0}{2}{-}{{r_0\over v_0}+1,{r_{1}\over v_{1}}+1}{\frac{(xt)^3}{v_0v_1v_2}}+{xt\over r_0}\hypergeom{0}{2}{-}{{r_0\over v_0}+1,{r_{1}\over v_{1}}}{\frac{(xt)^3}{v_0v_1v_2}}\nonumber\\
&+&{(xt)^2\over r_0r_1} \hypergeom{0}{2}{-}{{r_0\over v_0},{r_{1}\over v_{1}}}{\frac{(xt)^3}{v_0v_1v_2}}\Bigr]\label{GF2Symm1}\\
G_2(x,t)&=&e_q(a(1-q)t^3)\Bigl[
\hypergeomq{0}{2}{-}{qt_0,qt_{1}}{q}{\frac{q(xt)^3}{s_0s_1s_2}}
+{xt\over s_0(t_0-1)}\hypergeomq{0}{2}{-}{qt_0,t_{1}}{q}{\frac{q^{2}(xt)^3}{s_0s_1s_2}}\nonumber\label{GF2Symm2}\\
&+&{(xt)^2\over s_0s_1(t_0-1)(t_1-1)}\hypergeomq{0}{2}{-}{t_2,t_1}{q}{\frac{q^{3}(xt)^3}{s_0s_1s_2}}\Bigr],
\end{eqnarray}
where $t_j\neq1,q^{-1},q^{-2},\ldots,\ \frac{r_j}{v_j}\neq-1,-2,\ldots,$ and $v_j,s_j\neq 0$ for  $j=0,1,2$. \\
The aforementioned families contain many known sequences of 3-fold symmetric 2-OPS, namely the 2-OPS of Generalized Hermite type PS~\cite{cheikh2007dunkl}, the 2-OPS of $q$-Hermite type~\cite{bencheikh14,rajkovic2001q} and the 2-OPS of discrete $q$-Hermite I $\&$ II type~\cite{lamiri13}.
\subsection{The non symmetric case}
According to the above section, a polynomials sequence $\{P_n\}_{n\geq 0}$ with GF~(\ref{form brenke}) is not 3-fold symmetric if and only if $(a_1,a_2)\neq(0,0)$. The key element to studying these polynomials is to investigate the characteristic equation associated with (\ref{reccur 2 orthog (2)}).\\
Suppose that the leading coefficient of (\ref{reccur 2 orthog (2)}) is not 0 and  let $x_1,x_2,x_3$ be the roots of the following equation 
\begin{equation}\label{equation charc}
    (-a_1^4-2a_1a_3+3a_1^2a_2-a_2^2+a_4)r^3+(a_1^2a_2-a_2^2-a_1a_3+a_4)r^2+(a_4-a_1a_3)r+a_4=0.
\end{equation}
%We denote $\pi=x_1x_2x_3,\,\sigma=x_1x_2+x_1x_3+x_2x_3$ and $s=x_1+x_2+x_3$,  
Then the solutions of (\ref{reccur 2 orthog (2)}) take one of the following forms  
\begin{equation}\label{form delta generale}
\left\{
\begin{array}{llll}
\Delta_n =& (\alpha+\beta n+\gamma n^2)x_1^n, \quad &\textrm{if}&\;x_1=x_2=x_3 \\
\Delta_n =&(\alpha+\gamma n)x_1^n+\beta x_2^n,\quad&\textrm{if}&\; x_1\neq x_2\;\textrm{and}\;x_2=x_3\\
\Delta_n =& \alpha x_1^n+\beta x_2^n+\gamma x_3^n, \quad&\textrm{if}&\; x_i\neq x_j,\ i\neq j  .
\end{array}
\right.
\end{equation}
We note that (\ref{form delta generale}) remains valid in the case $-a_1^4-2a_1a_3+3a_1^2a_2-a_2^2+a_4=0$, by taking $\gamma=0$.\\
The two cases $a_1=0$ and $a_1\neq 0$ need to be considered separately.
\subsubsection{Case $\mathcal{A}:\ a_1=0$}
In this case, we have $a_2\neq0$.
The relations (\ref{reccur 2 orthog (2)})-(\ref{regularity condition 2}) become respectively
\begin{eqnarray}\label{rr when a_1=0 (d=2)}
 (a_4-a_2^2)(\Delta_n+\Delta_{n-1})+a_4(\Delta_{n-2}+\Delta_{n-3})=0,\ n =3,4,\ldots,\\
\label{regularity condition case1}
 a_3(\Delta_n+\Delta_{n-1}+\Delta_{n-2})\neq0,\ n=2,3,\ldots.
\end{eqnarray} 
This leads to,
\begin{equation}
\label{forme delta a_1=0}
\Delta_n =\alpha(-1)^n,\ \textrm {or}\
\Delta_n = \alpha+(\beta+\gamma n)(-1)^n,\ \textrm{or}\
\Delta_n = \alpha(-1)^n+\beta q^n+\gamma(-q)^n.
\end{equation}
The relation between coefficients and roots in (\ref{rr when a_1=0 (d=2)}) leads to $q^2=\frac{a_4}{a_2^2-a_4}$\\
%On the other hand the regularity condition (\ref{regularity condition1}) implies that $a_3\neq0$ \\
Next, we discuss all possible cases in accordance with the forms of $\Delta_n$ given in~(\ref{forme delta a_1=0}).
\paragraph{$(\mathcal{A}1):\ \Delta_n=\alpha(-1)^n,\ \alpha\neq0$}.\\
Substitute $\Delta_n$ into (\ref{recurrence relation}), we obtain
$\frac{A(t)}{A(-t)}=1+2a_3t^3$.\\
Therefore 
$\frac{A(-t)}{A(t)}=\frac{1}{1+2a_3t^3}=1-2a_3t^3.$ That is, $(1+2a_3t^3)(1-2a_3t^3)=1,$
which implies that $a_3=0$. This contradicts (\ref{regularity condition case1}).
\paragraph{($\mathcal{A}2):\  \Delta_n=\alpha+(\beta+\gamma n)(-1)^n$}\
\\
$\ast$ {If $(\beta,\gamma)\neq(0,0)$}, is the same way as case ($\mathcal{A}1$), there is no solution.\\
$\ast$ {If $(\beta,\gamma)=(0,0)$}, then $\alpha\neq0$, $r_n=(n+1)\alpha$ and $b_n=\frac{1}{n!\alpha^n}$.\\
Substitute $\Delta_n$ into (\ref{recurrence relation}), we obtain 
$\frac{A'(t)}{A(t)} = 2a_2t+3a_3t^2$.
We conclude that
\begin{equation}\label{exp A a_1=0 cas1}
G(x,t)=\exp(a_2t^2+a_3t^3)\exp(\frac{xt}{\alpha}).
\end{equation}
Here, $\{P_n\}_{n\geq 0}$ is the 2-OPS of Hermite type discovered by Douak in~\cite{douak1996} as the only $2$-OPS of Appell type.
\paragraph{($\mathcal{A}3):\ \Delta_n=\alpha(-1)^n+\beta q^n+\gamma(-q)^n,\  |q|\neq0,1$}\
\\
$\ast$ {If $\alpha\neq0$}, similarly to ($\mathcal{A}1$), we don't have a solution.\\
$\ast$ {If $(\alpha,\gamma)=(0,0)\,\textrm{and}\,\beta\neq0$ or $(\alpha,\beta)=(0,0)\,\textrm{and}\,\gamma\neq0$}, then $\Delta_n=\beta q^n,\; r_n=\beta \frac{1-q^{n+1}}{1-q}$ and $ b_n= \frac{(1-q)^n}{\beta^n(q;q)_n}$.
%The regularity condition implies that $q\neq j,\bar{j}$\\
Substitute $\Delta_n$ into (\ref{recurrence relation}), we get 
$$ \frac{A(t)}{A(qt)}= 1+a_2(1-q^2)t^2+a_3(1-q^3)t^3=(1-\lambda t)(1-\mu t)(1-\nu t).$$
It follows that
\begin{equation}\label{exp A a_1=0 cas2}
G(x,t)=\frac{1}{e_q(-\lambda t;q) e_q(-\mu t;q)e_q(-\nu t;q)} e_q((1-q){xt\over\beta}),
\end{equation}
%where $\frac{1}{\lambda},\frac{1}{\mu}$ and $\frac{1}{\nu}$ are the roots of $1+a_2(1-q^2)t^2+a_3(1-q^3)t^3=0$ and $q\neq j\bar{j}$\\
This family is exactly the $2$-OPS $q$-Appell PS (Brenke type with $B=e_q$) treated by Zaghouani 
in~\cite{zaghouani2005some}.\\ 
$\ast$ {If $\alpha=0,\ \beta\neq0$ and $\gamma\neq0$}, replace $\Delta_n=\beta q^n+\gamma(-q)^n$ in (\ref{recurrence relation}) we get
\begin{equation*}
\frac{A(t)}{A(qt)}=1+a_2(1-q^2)t^2+a_3(1-q^3)t^3\ \textrm{and}\
 \frac{A(t)}{A(-qt)}=1+a_2(1-q^2)t^2+a_3(1+q^3)t^3.
\end{equation*}
It follows that
\begin{eqnarray*}
 a_n&=&a_nq^n+a_2(1-q^2)q^{n-2}a_{n-2}+a_3(1-q^3)q^{n-3}a_{n-3}, \\
a_n&=&a_n(-q)^n+a_2(1-q^2)(-q)^{n-2}a_{n-2}+a_3(1+q^3)(-q)^{n-3}a_{n-3},\ n=3,4\ldots.
\end{eqnarray*}
This implies that $a_3q^{2n-3}a_{2n-3}=0, \ n=2,3\ldots$. Taking $n=3$, we get $a_3=0$, which contradicts the regularity condition (\ref{regularity condition case1}).
\subsubsection{ ($\mathcal{B}):\ a_1\neq0$}\
Next, we discuss the different expressions of $\Delta_n$ given by (\ref{form delta generale}) in accordance with the different expressions of $\Delta_n$ in (\ref{equation charc}).
\paragraph{$(\mathcal{B}1)$: $\Delta_n=\alpha x_1^n+\beta x_2^n+\gamma x_3^n,\ x_1,x_2,x_3$ are pairwise distinct}\
\subparagraph*{$\mathbf{\bullet\ (\mathcal{B}{11}):\ \alpha\beta\gamma x_1x_2x_3\neq0\ and \ x_i\neq1, i=1,2,3}$.}
Substituting $\Delta_n$ into (\ref{recurrence relation}) and taking into account that
$(x_1^n)_n,\ (x_2^n)_n$ and $(x_3^n)_n$ are linearly independent vectors, we deduce that $\{P_n\}_{n\geq 0}$ is a 2-OPS if and only $A$ satisfies
\begin{equation}\label{expression of Q_1,Q_2,Q_3}
\frac{A(t)}{A(x_it)}=1+\delta_{i,1}t+\delta_{i,2}t^2+\delta_{i,3}t^3 =Q_i(t),\ i=1,2,3,
\end{equation}
where 
\begin{eqnarray}\label{expression delta_ij}
\delta_{i,1}&=&a_1(1-x_i),\,\delta_{i,2}=a_1^2(x_i^2-x_i)+a_2(1-x_i^2)\nonumber\\\delta_{i,3}&=&a_1^3(x_i^2-x_i^3)-a_1a_2(x_i^2+x_i-2x_i^3)+a_3(1-x_i^3),\ i=1,2,3
\end{eqnarray}
The expression (\ref{expression of Q_1,Q_2,Q_3}) provides the following necessity
\begin{equation}\label{relation of Q_1,Q_2,Q_3}
 Q_i(x_jt)Q_j(t)=Q_i(t)Q_j(x_it),\quad 1\leq i\neq j\leq3
\end{equation}
Comparing the coefficients in both sides of (\ref{relation of Q_1,Q_2,Q_3}), we distinguish four cases.
\\
\underline{${\mathbf(\mathcal{B}{111}):\ \delta_{1,3}\delta_{2,3}\delta_{3,3}\neq0}$.} Here we have $x_1^3=x_2^3=x_3^3$, so 
 $(x_1,x_2,x_3)=(q,jq,\bar{j}q),$\\
 $\ q\neq 1,j,\bar{j},$\ where $j=e^{i\frac{2\pi}{3}}$. \\
$\triangleright$ \textit{Determination of $B(t)$ and $A(t)$}. We have
$$
  \Delta_{3n} = (\alpha+\beta+\gamma)q^{3n}, \
  \Delta_{3n+1} = (\alpha+\beta j+\gamma\bar{j})q^{3n+1}, \
  \Delta_{3n+2} = (\alpha+\beta\bar{j}+\gamma j)q^{3n+2},
$$
and
\begin{eqnarray*}
% \nonumber to remove numbering (before each equation)
  r_{3n} &=& \frac{\alpha+\beta+\gamma}{1-k_1q}\left(1-k_1q^{3n+1}\right) ,\; 
  r_{3n+1} =\frac{\alpha+\beta\bar{j}+\gamma j}{k_2-q}\left(1-k_2q^{3n+2}\right),\; 
  r_{3n+2} =  k_3\frac{1-q^{3n+3}}{1-q^3},
\end{eqnarray*}
where 
\begin{eqnarray*}
% \nonumber to remove numbering (before each equation)
  k_1 &=& \frac{(\alpha+\beta j+\gamma\bar{j})+(\alpha+\beta\bar{j}+\gamma j)q+(\alpha+\beta+\gamma)q^2}
   {(\alpha+\beta+\gamma)+(\alpha+\beta j+\gamma\bar{j})q+(\alpha+\beta\bar{j}+\gamma j)q^2} \\
k_2 &=& \frac{(\alpha+\beta\bar{j}+\gamma j)+(\alpha+\beta+\gamma)q+(\alpha+\beta j+\gamma\bar{j})q^2}
   {(\alpha+\beta+\gamma)+(\alpha+\beta j+\gamma\bar{j})q+(\alpha+\beta\bar{j}+\gamma j)q^2} \\
  k_3 &=&  (\alpha+\beta+\gamma)+(\alpha+\beta j+\gamma\bar{j})q+(\alpha+\beta\bar{j}+\gamma j)q^2.
\end{eqnarray*}
Therefore
 \begin{eqnarray*}
% \nonumber to remove numbering (before each equation)
  b_{3n} &=& \Bigl(\frac{(1-k_1q)(k_2-q)}{(\alpha+\beta+\gamma)(\alpha+\beta\bar{j}+\gamma j)}\Bigr)^n
    \frac{(1-q^3)^n}{k_3^n(k_1q;q^3)_n(k_2q^2;q^3)_n(q^3;q^3)_n }\\
  b_{3n+1} &=& \Bigl(\frac{1-k_1q}{(\alpha+\beta+\gamma)}\Bigr)^{n+1}\Bigl(\frac{k_2-q}{\alpha+\beta\bar{j}+\gamma j}\Bigr)^n
    \frac{(1-q^3)^n}{k_3^n(k_1q;q^3)_{n+1}(k_2q^2;q^3)_n(q^3;q^3)_n} \\
  b_{3n+2} &=&\Bigl(\frac{(1-k_1q)(k_2-q)}{(\alpha+\beta+\gamma)(\alpha+\beta\bar{j}+\gamma j)}\Bigr)^{n+1}
    \frac{(1-q^3)^n}{k_3^n(k_1q;q^3)_{n+1}(k_2q^2;q^3)_{n+1}(q^3;q^3)_n}.
\end{eqnarray*}  
  Thus
  \begin{eqnarray}\label{exp B a_1 nz cas 1}  
  B(t)=\sum_{u=0}^2{(k_2-q)^{\frac{u(u-1)}{2}}t^u\over((1-k_2q^2)(\alpha+\beta\bar{j}+\gamma j))^{\frac{u(u-1)}{2}}(\alpha+\beta+\gamma)^{\frac{u(3-u)}{2}}}\nonumber\\
  \times
  \hypergeomq{3}{2}{0,0,0}{k_1q^{1+\frac{3}{2}u(3-u)},k_2q^{2+\frac{3}{2}u(u-1)}}{q^3}{{(1-k_1q)(k_2-q)(1-q^3)\over k_3(\alpha+\beta+\gamma)(\alpha+\beta\bar{j}+\gamma j)}t^3}.
  \end{eqnarray}
The relation between coefficients and roots of (\ref{equation charc}) suggests that $a_2(a_1^2-a_2)=0$.
\\
$\ast$ {If $a_2=0$}, then $a_3=\frac{q^3}{1-q^3}a_1^3$ and $a_4=\frac{q^3}{1-q^3}a_1^4$. \\
Relation (\ref{expression of Q_1,Q_2,Q_3}) is equivalent to the system 
\begin{eqnarray*}
% \nonumber to remove numbering (before each equation)
  a_n &=& q^{n-1}\left(qa_n+a_1(1-q)a_{n-1}+a_1^2(q-1)a_{n-2}+a_1^3a_{n-3}\right) \\
  a_n &=& (jq)^{n-1}\left(jqa_n+a_1(1-jq)a_{n-1}+a_1^2(jq-1)a_{n-2}+a_1^3a_{n-3}\right) \\
  a_n &=& (\bar{j}q)^{n-1}\left(\bar{j}qa_n+a_1(1-\bar{j}q)a_{n-1}+a_1^2(\bar{j}q-1)a_{n-2}+a_1^3a_{n-3}\right).
\end{eqnarray*}
Therefore
$$
  a_{3n+2} = 0, \
  a_{3n+1} = a_1a_{3n},\ \textrm{and}\
  a_{3n} = \frac{q^{3n(n+1)\over2}a_1^{3n}}{(q^3;q^3)_n}.
$$
It follows that
\begin{equation}\label{exp A1 a_1 nz cas 1}
A(t)={1+a_1t\over e_{q^3}((a_1qt)^3)}.
\end{equation}
%E_{q^3}((-a_1qt)^3)
$\ast$ {If $a_2=a_1^2$,} then $a_3=\frac{q^3}{q^3-1}a_1^3$, $a_4=\frac{q^3}{q^3-1}a_1^4$. By aids of (\ref{expression of Q_1,Q_2,Q_3}), we get
\begin{eqnarray*}
% \nonumber to remove numbering (before each equation)
  a_n &=& q^{n-2}\left(q^2a_n+a_1q(1-q)a_{n-1}+a_1^2(1-q)a_{n-2}-a_1^3a_{n-3}\right) \\
  a_n &=& (jq)^{n-2}\left(\bar{j}q^2a_n+a_1q(j-\bar{j}q)a_{n-1}+a_1^2(1-jq)a_{n-2}-a_1^3a_{n-3}\right)  \\
  a_n &=& (\bar{j}q)^{n-2}\left(jq^2a_n+a_1q(\bar{j}-jq)a_{n-1}+a_1^2(1-\bar{j}q)a_{n-2}-a_1^3a_{n-3}\right),
\end{eqnarray*}
which yields,
$$
  a_{3n+2} = a_1a_{3n+1},\ 
  a_{3n+1} = a_1a_{3n},\
  \textrm{and}\ a_{3n} = \frac{(-1)^nq^{3n(n+1)\over2}}{(q^3;q^3)_n}(a_1^3)^n,
$$
which yields to
\begin{equation}\label{exp A2 a_1 nz cas 1}
A(t)={1+a_1t+(a_1t)^2\over e_q(-(qa_1t)^3)}.
\end{equation}
The 2-OPS of Brenke type, with $B(t)$ and $ A(t)$ given respectively by (\ref{exp B a_1 nz cas 1}) and (\ref{exp A2 a_1 nz cas 1}), can be viewed as a generalization of the Chihara orthogonal PS discovered in~\cite{chihara1968orthogonal} and investigated later in~\cite{chihara1971orthogonality}. This family appears to be a new 2-OPS.\\
%E_{q^3}((qa_1t)^3) (-(qa_1t)^3;q^3)_\infty
\underline{$\mathbf{(\mathcal{B}{112}): \ \delta_{1,3}=0\,\,\textrm{and}\,\,\delta_{2,3}\delta_{3,3}\neq0}$.}\\
$\ast$ {If $\delta_{1,2}\neq0$}, we have $x_2^3=x_3^3$ and $x_1^3=x_3^2=x_2^2$, which is impossible.\\
$\ast$ {If $\delta_{1,2}=0$}, then $\delta_{1,1}\neq0$ and $x_1^3=x_2=x_3$. This is again impossible.\\
\underline{$\mathbf{(\mathcal{B}{113}):\ \delta_{1,3}=\delta_{2,3}=0\,\,\textrm{and}\,\,\delta_{3,3}\neq0}$.}\\
$\ast$ {If $\delta_{1,2}\delta_{2,2}\neq0$}, $x_1=-x_2$ and $x_3^2=x_1^3=x_2^3$. That is impossible.\\
$\ast$ {If $\delta_{1,2}=\delta_{2,2}=0$}, $x_1=x_2$. This is also impossible.\\
$\ast$ {If $\delta_{1,2}=0$ and $\delta_{2,2}\neq0$} implies that $x_2=x_1^2,\,x_3=x_1^3$ and $x_3^2=x_2^3$.\\
Put $q=x_1$, and write $(x_1,x_2,x_3)=(q,q^2,q^3)$, $q\neq0,\pm1,j,\bar{j} $\\ 
$\triangleright$ \textit{Determination of $B(t)$ and $A(t)$.}
We have
$\Delta_n=\alpha q^n+\beta q^{2n}+\gamma q^{3n}
$,
and
\begin{eqnarray*}
% \nonumber to remove numbering (before each equation)
  r_n &=& \alpha\frac{1-q^{n+1}}{1-q}+\beta\frac{1-q^{2(n+1)}}{1-q^2}+\gamma\frac{1-q^{3n+3)}}{1-q^3} \\
   &=& \frac{1-q^{n+1}}{1-q}\Bigl(k_0+(k_0-\alpha)q^{n+1}+(k_0-\alpha-\frac{\beta}{1+q})q^{2n+2}\Bigr), \\   
  % &=& [n+1]_qK\left(1+(1-\frac{\alpha}{K})q^{n+1}+(1-\frac{\alpha}{K}-\frac{\beta}{K(1+q)})q^{2(n+1)}\right)\\
   %&=& [n+1]_qK\left(1-K_1q^{n+1}\right)\left(1-K_2q^{n+1}\right)
\end{eqnarray*}
where $k_0=\alpha+\frac{\beta}{1+q}+\frac{\gamma}{1+q+q^2}$\\
\textbf{If $k_0=0$}, we may write
$r_n=-\alpha\frac{1-q^{n+1}}{1-q}q^{n+1}(1-\nu q^{n+1}),\ \textrm{where}\ -\nu=1+\frac{\beta}{\alpha(1+q)}$,
so that, $b_n=(\frac{q-1}{q\alpha})^n{q^{-{n\choose2}}\over(q;q)_n(q\nu;q)_n}$.
Thus
\begin{equation}\label{exp1 B one A (1)} B(t)=
\hypergeomq{3}{1}{0,0,0}{\nu, q}{q}{\frac{1-q}{\alpha q}t}.
\end{equation}
\textbf{If $k_0\neq0$}, since $r_n$ can be written as 
$
r_n=\frac{1-q^{n+1}}{1-q}k_0\left(1-\lambda q^{n+1}\right)\left(1-\mu q^{n+1}\right),
$
where $\lambda,\mu$ satisfy, $\lambda+\mu=\frac{\alpha}{k_0}-1$ and $\lambda\mu=1-\frac{\alpha}{k_0}-\frac{\beta}{k_0(1+q)},$
then $b_n=\frac{(1-q)^n}{k_0^n(q;q)_n(q\lambda;q)_n(q\mu;q)_n}$ and hence
\begin{equation}\label{exp2 B one A (2)}
 B(t)=
 \hypergeomq{3}{2}{0,0,0}{q\lambda,q\mu}{q}{\frac{(1-q)t}{k_0}}.
\end{equation}
From the equality $\delta_{1,2}=\delta_{1,3}=0$, we deduce readily that 
$$a_2=\frac{q}{q+1}a_1^2\quad \textrm{and} \quad a_3=\frac{1-q}{(1-q^3)(1+q)}q^3a_1^3.$$
 Thus (\ref{expression of Q_1,Q_2,Q_3}) may be written as
\begin{align*}
\frac{A(t)}{A(qt)}=&1+\eta t,\quad \eta=(1-q)a_1,&\\
\frac{A(t)}{A(q^2t)}=&1+a_1(1-q^2)t+a_1^2q(1-q)^2t^2= (1+\eta t)(1+\eta qt)&\\
\frac{A(t)}{A(q^3t)}=&1+a_1(1-q^3)t+a_1^2(1-q^3)(q-q^2)t^2+a_1^3q^3(1-q)^3t^3,&\\
=&(1+\eta t)(1+\eta qt)(1+\eta q^2t).&
\end{align*}
Hence
\begin{equation}\label{exp A both B (1)}
A(t)=(-\eta t;q)_\infty=\frac{1}{e_q(\eta t)}.
\end{equation}
\underline{$\mathbf{(\mathcal{B}{114}):\ \delta_{1,3}=\delta_{2,3}=\delta_{3,3}=0.}$}\\
$\ast$ {If $\delta_{1,2}\delta_{2,2}\delta_{3,2}\neq0$}, then $x_1=-x_2=-x_3$.\\
$\ast$ {If $\delta_{1,2}=0$ and $\delta_{2,2}\delta_{3,2}\neq0$}, we get $x_1^2=x_2=x_3$: impossible.\\
$\ast$ {If $\delta_{1,2}=\delta_{2,2}=0$ and $\delta_{3,2}\neq0$}, we obtain $x_1=x_2$.\\
In each of the three previous cases, a solution does not exist. 
\paragraph*{$\bullet(\mathcal{B}{12}):\ \alpha\beta\gamma x_2x_3\neq0$ and $x_1=1$.} 
Writing (\ref{expression of Q_1,Q_2,Q_3}) for $x_2, x_3$ and letting $x_1\to1$, we get for $i=1,2$:\\
\centerline{
$
\frac{A(t)}{A(x_it)}= 1+\delta_{i,1}t+\delta_{i,2}t^2+\delta_{i,3}t^3, \ \
\frac{A'(t)}{A(t)}= a_1+(2a_2-a_1^2)t+(a_1^3-3a_1a_2+3a_3)t^2.
$}  
Such a situation can not occur.
\paragraph*{$\bullet(\mathcal{B}{13}):\ \alpha\beta\gamma x_1x_2x_3=0$.}
We suppose that $\gamma x_3=0$ and $\gamma\neq0$, then 0 is a solution for all characteristic equations associated with the relation (\ref{recurrence relation}). So that $a_m=0,\ m=4,5\ldots$, which is impossible by virtue of Proposition \ref{proposition regularity}.\\ 
We deduce that, $\gamma x_3=0$ occurs if and only if $\gamma=0$ and necessary $\Delta_n$ takes the form
\begin{equation}
\Delta_n=\alpha x_1^n+\beta x_2^n.
\end{equation}
\underline{$\mathbf{(\mathcal{B}{131}):\ \alpha\beta x_1 x_2\neq0,\ x_1\neq1\ and \ x_2\neq1}$.}\\
Since the sequences $(x_1^n)_n$ and $(x_2^n)_n$ are linearly independent and after substituting $\Delta_n$ into (\ref{th}), we deduce that $\{P_n\}_{n\geq 0}$ is a 2-OPS if and only if
\begin{equation}\label{expression of Q_1,Q_2}
\frac{A(t)}{A(x_it)}=1+\delta_{i,1}t+\delta_{i,2}t^2+\delta_{i,3}t^3 =Q_i(t),\quad i=1,2,
\end{equation}
where $\delta_{i,s}\quad i=1,2\,\,s=1,2,3$ are defined in (\ref{expression delta_ij}).\\
 From (\ref{expression of Q_1,Q_2}) we get the necessary following condition 
\begin{equation}\label{relation Q1Q2}
Q_1(x_2t)Q_2(t)=Q_1(t)Q_2(x_1t).
\end{equation}
As well in the case $(\mathcal{B}{11})$, we start a discussion according to the leading coefficients in both sides of (\ref{relation Q1Q2}).\\
$\mathbf{(i)\ (\mathcal{B}{1311}):\ \delta_{1,3}\delta_{2,3}\neq0}$. We have $x_1^3=x_2^3$ and then $(x_1,x_2)=(q,jq),\ q\neq1,j,\bar{j}$\\ 
$\triangleright$ \textit{Determination of $B(t)$ and $A(t)$.} In this case, we have
\begin{equation*}
\Delta_{3n}=(\alpha+\beta)q^{3n},\
\Delta_{3n+1}=(\alpha+j\beta)q^{3n+1},\
\Delta_{3n+2}=(\alpha+\bar{j}\beta)q^{3n+2},
\end{equation*} 
and 
\begin{equation*}
% \nonumber to remove numbering (before each equation)
  r_{3n} = \frac{\alpha+\beta}{1-k_1q}\left(1-k_1q^{3n+1}\right), \;
  r_{3n+1}= \frac{\alpha+\beta\bar{j}}{k_2-q}\left(1-k_2q^{3n+2}\right), \;
  r_{3n+2} =  k_3\frac{1-q^{3n+3}}{1-q^3},
\end{equation*}
where 
\begin{eqnarray*}
% \nonumber to remove numbering (before each equation)
  k_1 &=& \frac{(\alpha+\beta j)+(\alpha+\beta\bar{j})q+(\alpha+\beta)q^2}{(\alpha+\beta)+(\alpha+\beta j)q+(\alpha+\beta\bar{j})q^2}, \
k_2 = \frac{(\alpha+\beta\bar{j})+(\alpha+\beta)q+(\alpha+\beta j)q^2}{(\alpha+\beta)+(\alpha+\beta j)q+(\alpha+\beta\bar{j})q^2} ,\\
  k_3 &=&  (\alpha+\beta)+(\alpha+\beta j)q+(\alpha+\beta\bar{j})q^2.
\end{eqnarray*}
Therefore
 \begin{eqnarray*}
% \nonumber to remove numbering (before each equation)
  b_{3n} &=& \left(\frac{(1-k_1q)(k_2-q)}{(\alpha+\beta)(\alpha+\beta\bar{j})}\right)^n
    \frac{(1-q^3)^n}{k_3^n(k_1q;q^3)_n(k_2q^2;q^3)_n(q^3;q^3)_n}, \\
  b_{3n+1} &=&\Bigl(\frac{1-k_1q}{\alpha+\beta}\Bigr)^{n+1}\Bigl(\frac{k_2-q}{\alpha+\beta\bar{j}}\Bigr)^n
    \frac{(1-q^3)^n}{k_3^n(k_1q;q^3)_{n+1}(k_2q^2;q^3)_n(q^3;q^3)_n}, \\
  b_{3n+2} &=&\left(\frac{(1-k_1q)(k_2-q)}{(\alpha+\beta)(\alpha+\beta\bar{j})}\right)^{n+1}
    \frac{(1-q^3)^n}{k_3^n(k_1q;q^3)_{n+1}(k_2q^2;q^3)_{n+1}(q^3;q^3)_n}.
\end{eqnarray*} 
 Thus
\begin{eqnarray}\label{exp B both A (2)} 
  B(t)=\sum_{u=0}^2{(k_2-q)^{\frac{u(u-1)}{2}}t^u\over((1-k_2q^2)(\alpha+\beta\bar{j}))^{\frac{u(u-1)}{2}}(\alpha+\beta)^{\frac{u(3-u)}{2}}}\nonumber\\
  \times
 \hypergeomq{3}{2}{0,0,0}{k_1q^{1+\frac{3}{2}u(3-u)},k_2q^{2+\frac{3}{2}u(u-1)}}{q^3}{{(1-k_1q)(k_2-q)(1-q^3)\over k_3(\alpha+\beta)(\alpha+\beta\bar{j})}t^3}.
  \end{eqnarray}
To evaluate $A(t)$, compare the coefficients in equation (\ref{expression of Q_1,Q_2}), to get
\begin{equation}\label{system 1}
\left\{
\begin{array}{ccc}
a_n&=&q^{n-3}(q^3a_n+\delta_{1,1}q^2a_{n-1}+\delta_{1,2}qa_{n-2}+\delta_{1,3}a_{n-3}),\ n=3,4,\ldots,\\\\
a_n&=&(jq)^{n-3}(q^3a_n+\delta_{2,1}\bar{j}q^2a_{n-1}+\delta_{2,2}jqa_{n-2}+\delta_{2,3}a_{n-3}),\  n=3,4,\ldots,
\end{array}
\right.
\end{equation}
so that,
\begin{eqnarray}
-(qa_1^2+ja_2)a_{3n+1}+(qa_1^3-(q-j)a_1a_2)a_{3n}&=&-qa_1a_{3n+2},\label{system 2 eq1}\\
-(j+q) q^2a_1a_{3n+1}-(q^2a_2-(j+q)qa_1^2)qa_{3n} &=&-(q^3a_{3n+2}+\mu_1 a_{3n-1}),\label{system 2 eq2}\\
(q^3a_1^2-(\bar{j}+q^2)qa_2)a_{3n-1}+\mu_2 a_{3n-2}&=&q^3a_{3n+1}-q^3a_1a_{3n},\label{system 2 equation 3}
\end{eqnarray} 
where 
\begin{eqnarray*}
\mu_1&=&\frac{1-j}{3}(\delta_{1,3}-\bar{j}\delta_{2,3})=-(j+q)q^2a_1^3+((j+2q)q^2a_1a_2+(1-q^3)a_3,\\
\mu_2&=&\frac{1-\bar{j}}{3}(\delta_{1,3}-j\delta_{2,3})=-q^3a_1^3+(\bar{j}+2q^2)qa_1a_2+(1-q^3)a_3.
\end{eqnarray*}
The determinant associated to the system of equations (\ref{system 2 eq1})-(\ref{system 2 eq2}) is
\begin{equation}\label{determinant}
\mathcal{D}=\begin{vmatrix}
-(qa_1^2+ja_2) & qa_1^3-(q-j)a_1a_2 \\\\
-(j+q) q^2a_1 &-(q^3a_2-(j+q)q^2a_1^2)  
\end{vmatrix}=jq^3a_2(a_2-a_1^2),
\end{equation} 
which advises the following discussion depending on $\mathcal{D}$ is zero or not.\\
$\ast$ {\textbf{$a_2(a_1^2-a_2)\neq0$}}: By virtue of (\ref{system 2 eq1})-(\ref{system 2 eq2}), we get
\begin{equation}\label{solution1 syst 2}\\
a_{3n+1}=\frac{a_1}{a_2}a_{3n+2}+D_1a_{3n-1}\ \textrm{and}\ 
a_{3n}=\frac{1}{a_2}a_{3n+2}-D_2a_{3n-1},
\end{equation}
where $D_1={(q-j)a_2-qa_1^2\over jq^3a_2(a_1^2-a_2)}a_1\mu_1$ and $D_2={ja_2+qa_1^2\over jq^3a_2(a_1^2-a_2)}\mu_1$.\\
Substituting (\ref{solution1 syst 2}) in (\ref{system 2 equation 3}), we obtain 
\begin{equation}\label{equation 3}
(q^3D_1+q^3a_1D_2+q^3a_1^2-(\bar{j}+q^2)qa_2+\mu_2\frac{a_1}{a_2})a_{3n-1}+\mu_2 D_1a_{3n-4}=0 .
\end{equation}
Taking respectively $n=1$ and $n=2$, we get  
\begin{equation}\label{equation cons1}
q^3D_1+q^3a_1D_2+q^3a_1^2-(\bar{j}+q^2)qa_2+\mu_2\frac{a_1}{a_2}=0\ \textrm{and}\
 \mu_2 D_1=0.
\end{equation}
$\star$ {If $\mu_2=0$} then $\delta_{1,3}=j\delta_{2,3}$. Equating the coefficients of $t^5$ in (\ref{relation Q1Q2}), we obtain $(j-1)\delta_{1,3}a_2=0$, which is absurd.\\
$\star$ \textit{If $D_1=0$} then two cases arise:\\
$\circ$ If $\mu_1=0$, this is equivalent to $\delta_{1,3}=\bar{j}\delta_{2,3}$. After equating the coefficients of $t^5$ in (\ref{relation Q1Q2}), we obtain $(1-j)\delta_{1,3}q(a_2-a_1^2)=0$ which is again absurd.
\\
$\circ$ If $(q-j)a_2-qa_1^2=0$, this yields $\delta_{2,2}=\bar{j}\frac{1-jq}{1-q}\delta_{1,2}$, which allows us to distinguish two sub-cases:
\\
- If $\delta_{1,2}=0$, then $\delta_{2,2}=0$.  Compare the coefficients of $t^4$ and $t^3$ in (\ref{relation Q1Q2}), we obtain
 $$
\delta_{2,3}-\delta_{1,3}=0\ \textrm{and}\
(\bar{j}+q)\delta_{2,3}-j(1+q)\delta_{1,3}=0,
$$
which gives $\delta_{1,3}=\delta_{2,3}=0$. That is impossible.
\\
- If $\delta_{1,2}\neq0$ then $\delta_{2,2}\neq0$. Again, equating the coefficients of $t^5$ and $t^4$ in (\ref{relation Q1Q2}), we obtain $\delta_{1,3}=\delta_{2,3}=(\frac{q}{q-j}a_1)^3$. It follows that, 
$\mu_1=\mu_2=\delta_{1,3}$, $D_2=-\frac{jqa_1}{(q-j)^2}$ and $q^3a_1^2-(\bar{j}+q^2)qa_2=\frac{\bar{j}+jq}{j-q}q^2a_1^2$. \\
Replace all these expressions in (\ref{equation cons1}), we obtain, $\frac{2q^2a_1^2(1-j)q^2}{(q-j)^2}=0$, which is not possible.
 \\
$\ast$ {\textbf{$a_2=0$}}.  From relation (\ref{system 1}), it follows at once that 
\begin{eqnarray*}
a_{3n+1}-a_1a_{3n}=\frac{1}{a_1}a_{3n+2},\
a_{3n+1}-a_1a_{3n}=-a_1^2a_{3n-1}+(-a_1^3+\frac{1-q^3}{q^3}a_3)a_{3n-2},\\
(j+q)q^2a_1(a_{3n+1}-a_1a_{3n})=q^3a_{3n+2}+((1-q^3)a_3-(j+q)q^2a_1^3)a_{3n-1},
\end{eqnarray*}
which obviously yields,
\begin{equation*}
a_{3n+2}=0,\
a_{3n+1}=a_1a_{3n},\textrm{and}\
a_{3n}={(-1)^nq^{\frac{3n(n+1)}{2}}a_1^{3n}\over(q^3;q^3)_n}.
\end{equation*}
Hence
\begin{equation}\label{exp1 A one B}
A(t)={1+a_1t\over  e_{q^3}((a_1qt)^3)}.
\end{equation}
$\ast$ {\textbf{$a_2=a_1^2$}.} As before, relation~(\ref{system 1}) implies that
\begin{eqnarray*}
a_1^2((j+q)a_{3n+1}+ja_1a_{3n})&=&qa_1a_{3n+2}\\
q^2a_1((j+q)a_{3n+1}+ja_1a_{3n})&=&q^3a_{3n+2}+(q^3a_1^3+(1-q^3)a_3)a_{3n-1}\\
q^3(a_{3n+1}-a_1a_{3n})&=&\bar{j}qa_1^2a_{3n-1}-((\bar{j}+q^2)qa_1^3+(1-q^3)a_3)a_{3n-2}.
\end{eqnarray*}
It follows that 
\begin{equation*}
ja_1(a_{3n+1}-a_1a_{3n})-q(a_{3n+2}-a_1a_{3n+1})=
q^2(a_{3n+1}-a_1a_{3n})-\bar{j}a_1^2(a_{3n-1}-a_1a_{3n-2})=0,
\end{equation*}
which leads to
\begin{equation*}
  a_{3n+2} = a_1a_{3n+1},\
  a_{3n+1} = a_1a_{3n},\ \textrm{and}\
  a_{3n} = \frac{(-1)^nq^{3n(n+1)\over2}}{(q^3;q^3)_n}(a_1^3)^n.
\end{equation*}
Thus
\begin{equation}\label{exp2 A one B}
A(t)={1+a_1t+(a_1t)^2\over e_{q^3}(-(qa_1t)^3)}.
\end{equation}
The obtained families are similar, up to a parameter, to these derived in case ($\mathcal{B}111$). We acquire again a 2-OPS of Chihara type. 
\\
$\mathbf{(ii)\ (\mathcal{B}{1312}):\ \delta_{1,3}=0\ \textrm{and}\,\,\delta_{1,2}\delta_{2,3}\neq0.}$
The roots $x_1$ and $x_2$ are related by $x_2^2=x_1^3$. Put $x_1=q^2$, then $(x_1,x_2)=(q^2,q^3),\ q\neq\pm1,\pm j,\pm\bar{j}$ \\
$\triangleright$ \textit{Determination of $B(t)$ and $A(t)$.} 
We have 
\begin{eqnarray*}
\Delta_n&=&\alpha q^{2n}+\beta q^{3n},\ \textrm{and}\\
r_n&=&\alpha\frac{1-q^{2n+2}}{1-q^2}+\beta\frac{1-q^{3n+3}}{1-q^3}
=\frac{(q;q)_{n+1}}{(1-q)^{n+1}}(k_0+k_0q^{n+1}+(k_0-\frac{\alpha}{1+q})q^{2n+2}),
\end{eqnarray*}
where $k_0=\frac{\alpha}{1+q}+\frac{\beta}{1+q+q^2}$.\\
\textbf{If $k_0=0$}, then $b_n=(\frac{q^2-1}{q^2\alpha})^n{((-1)^nq^{n\choose2})^{-2}\over(q;q)_n}$. So,
\begin{equation}\label{exp1 B One A}
B(t)=\hypergeomq{3}{0}{0,0,0}{-}{q}{\frac{q^2-1}{\alpha q^2}t}.
\end{equation}
\textbf{If $k_0\neq0$}, then $b_n=\frac{(1-q)^n}{k_0^n(\lambda q;q)_n(\mu q;q)_n(q;q)_n}$. Thus,
\begin{equation}\label{exp2 B One A}
B(t)=\hypergeomq{3}{2}{0,0,0}{\lambda q,\mu q}{q}{\frac{1-q}{k_0}t},
\end{equation}
where $\lambda$ and $\mu$ are the solutions of $1+r+(1-\frac{\alpha}{k_0(1+q)})r^2=0$.\\
Equating the coefficients of $t^4$ in the identity (\ref{relation Q1Q2}) and taking into account that $(1-q^6)a_3=a_1^3(q^6-q^4)+a_1a_2(q^2+q^4-2q^6),$
we obtain the quadratic equation in $a_2$:
\begin{equation*}
(1+q^2)(1+q^3)^2a_2^2-(q+3q^3+q^5+2q^6-q^7+2q^8)a_1^2a_2+q^3(1+q^3-q^4+q^5)a_1^4=0,
\end{equation*} 
which yields, 
$$a_2=\frac{q}{q+1}a_1^2\ \textrm{or}\ a_2=\frac{1-q^2+q^3}{(1-q+q^2)(1+q^3)}q^2a_1^2.$$
$\ast$ {If $a_2=\frac{q}{q+1}a_1^2$}, then $a_3=\frac{q^3}{(1+q)(1+q+q^2)}a_1^3$, $\delta_{1,2}=q(1-q)^2a_1^2$, 
$\delta_{2,2}=q(1-q)(1-q^3)a_1^2$ and $\delta_{1,3}=q^3(1-q)^3a_1^3$.\\ 
Substitute all these expressions in (\ref{expression of Q_1,Q_2}), to obtain
\begin{align*}
\frac{A(t)}{A(q^2t)}=&1+(1-q^2)a_1t+q(1-q)^2a_1^2t^2=(1+\eta t)(1+\eta qt),\ \eta=(1-q)a_1,&\\
\frac{A(t)}{A(q^3t)}=&1+(1-q^3)a_1t+q(1-q)(1-q^3)a_1^2t^2+q^3(1-q)^3a_1^3t^3&\\
=&(1+\eta t)(1+\eta qt)(1+\eta q^2t).&
\end{align*}
Therefore
\begin{equation}\label{expression A 1 (q^2,q^3)}
A(t)=(-\eta t;q)_\infty=\frac{1}{e_q(\eta t)}.
\end{equation}
$\ast$ {If $a_2=\frac{1-q^2+q^3}{(1-q+q^2)(1+q^3)}q^2a_1^2$}, then $a_3=-\frac{1-q^3+q^4}{(1-q+q^2)(1+q^2+q^4)(1+q^3)}q^5a_1^3$,\\
 $\delta_{1,2}=(\frac{q-1}{1-q+q^2})^2q^3a_1^2$, $\delta_{2,2}=q^2(1-q^2)\frac{1-q^3}{1+q^3}q^2a_1^2$ and $\delta_{2,3}=(\frac{1-q^2}{1+q^3})^3q^6a_1^3$.\\
Substituting the previous expressions into (\ref{expression of Q_1,Q_2}), we get
\begin{align*}
\frac{A(t)}{A(q^2t)}=&1+(1-q^2)a_1t+(\frac{q-1}{1-q+q^2})^2q^3a_1^2t^2=(1+\nu t)(1+\nu q^3t)&\\
\frac{A(t)}{A(q^3t)}=&1+(1-q^3)a_1t+q^2(1-q^2)\frac{1-q^3}{1+q^3}q^2a_1^2t^2+(\frac{1-q^2}{1+q^3})^3q^6a_1^3t^3&\\
=&(1+\nu t)(1+\nu q^2t)(1+\nu q^4t),\quad \nu=\frac{1-q^2}{1+q^3}a_1.&
\end{align*}
This gives the explicit expression,
\begin{equation}\label{exp A both B}
A(t)=\frac{(-\nu t;q)_\infty}{1+\nu qt}.
\end{equation}
$\mathbf{(iii)\ (\mathcal{B}{1313}):\ \delta_{1,3}=\delta_{1,2}=0\,\,\textrm{and}\,\,\delta_{2,3}\neq0.}$ Since $x_2=x_1^3$, then $(x_1,x_2)$ takes the form  $(q,q^3),\, q\neq0,\pm 1,j,\bar{j}$.\\
$\triangleright$ \textit{Determination of $B(t)$ and $A(t)$.}
\begin{eqnarray*}
\Delta_n&=&\alpha q^n+\beta q^{3n}\\
r_n&=&\alpha\frac{1-q^{n+1}}{1-q}+\beta\frac{1-q^{3n+3}}{1-q^3}=\frac{1-q^{n+1}}{1-q}(k_0+(k_0-\alpha)q^{n+1}+(k_0-\alpha)q^{2n+2}),
\end{eqnarray*}
where $k_0=\alpha+\frac{\beta}{1+q+q^2}$. \\
\textbf{If $k_0=0$}, then $ b_n={(1-q)^n(-1)^n(q^{n\choose2})^{-1}\over(\alpha q)^n(q;q)_n(-q;q)_n}$. Thus we find that
\begin{equation}\label{exp1 B L both A}
\quad B(t)=
\hypergeomq{3}{1}{0,0,0}{-q}{q}{\frac{1-q}{\alpha q}t}.
\end{equation}
\textbf{If $k_0\neq0$}, then $b_n={(1-q)^n\over k_0^n(q;q)_n(\lambda q;q)_n(\mu q;q)_n}$. It follows readily that 
\begin{equation}\label{exp2 B L both A}
B(t)=
\hypergeomq{3}{2}{0,0,0}{{\lambda q,\mu q}}{q}{\frac{1-q}{k_0}t} ,
\end{equation}
where $\lambda$ and $\mu$ are the solutions of the equation $1+(1-\frac{\alpha}{k_0})r+(1-\frac{\alpha}{k_0})r^2=0$.\\
Now, the condition $\delta_{1,2}=\delta_{1,3}=0$, leads to $a_2=\frac{q}{1+q}a_1^2$ and $a_3=\frac{1-q}{(1+q)(1-q^3)}a_1^3$.\\
From (\ref{expression of Q_1,Q_2}), we have
\begin{align*}
\frac{A(t)}{A(qt)}=&1+\eta t,\ \eta=(1-q)a_1,\ \textrm{and}&\\
\frac{A(t)}{A(q^3t)}=&1+(1-q^3)a_1t+q(1-q)(1-q^3)a_1^2t^2+q^3(1-q)^3a_1^3t^3&\\
=&(1+\eta t)(1+\eta qt)(1+\eta q^2t).&
\end{align*}
We deduce that 
\begin{equation}\label{exp A sing B}
A(t)=(-\eta t;q)_\infty=\frac{1}{e_q(\eta t)}.
\end{equation}
The above two cases correspond to the 2-OPS of Little $q$-Laguerre type defined and studied in~\cite{cheikh2011d}.\\
$\mathbf{(iv)\ (\mathcal{B}{1314}):\ \delta_{1,3}=\delta_{2,3}=0}.$
For $i=1,2,\ \delta_{i,3}$ can be written as \\
\centerline{$\delta_{i,3}=(1-x_i)((a_1^3-2a_1a_2+a_3)x_i^2+(a_3-a_1a_2)x_i+a_3)=0$.
}
Since $x_i\neq1$, we deduce that $x_1,\; x_2$ are the roots of $(a_1^3-2a_1a_2+a_3)m^2+(a_3-a_1a_2)m+a_3=~0$. This contradicts (\ref{regularity condition 2}).
\\
\underline{$\mathbf{(\mathcal{B}{132}):\ \alpha\beta x_2\neq0\ and\ x_1=1.}$}\\ Similar to the case $(\mathcal{B}{12})$, there is no solution here.
\\
\underline{$\mathbf{(\mathcal{B}{133}):\ \alpha\beta x_1 x_2=0.}$}\\
Similar computation as case $(\mathcal{B}{13})$, leads to the expression $\Delta_n=\alpha x_1^n,\ \alpha\neq0$.\\
$\triangleright$ \textit{Determination of $B(t)$ and $A(t)$ when $x=1$.}\\ 
On one hand, we have $
\Delta_n=\alpha,\ r_n=\alpha(n+1),\ \textrm{and}\ b_n=\frac{1}{\alpha^nn!},$
which gives
\begin{equation}\label{exp B aal case}
B(t)=\exp(\frac{t}{\alpha}).
\end{equation}
On the other hand, replace $\Delta_n$ in (\ref{th}), we obtain\\
\centerline{
$
\frac{A'(t)}{A(t)}=a_1+(2a_2-a_1^2)t+(3a_3-3a_1a_2+a_1^3)t^2.
$}
Moreover (\ref{regularity condition 2}) implies that $3a_3-3a_1a_2+a_1^3\neq0$. Thus 
\begin{equation}\label{exp A aal case}
A(t)=\exp(a_1t+(a_2-{a_1^2\over2})t^2+(a_3-a_1a_2+{a_1^3\over3})t^3).
\end{equation}
Here we recover again the 2-OPS of Hermite type introduced in~\cite{douak1996}.\\
$\triangleright$ \textit{Determination of $B(t)$ and $A(t)$ when $x_1\neq1$.}\\
Denote $q:=x_1$. We have
\\
\centerline{
$\Delta_n=\alpha q^n,\ r_n=\alpha\frac{1-q^{n+1}}{1-q},
\ \textrm{and}\ b_n={(1-q)^n\over\alpha^n(q;q)_n}$.
}
Thus 
\begin{equation}\label{exp B al case}
B(t)=e_q\Bigl({1-q\over \alpha}t\Bigr).
\end{equation} 
Now, substituting $\Delta_n$ into (\ref{th}), we get
\begin{eqnarray}\label{equation A (Delta alpha q)}
\frac{A(t)}{A(qt)}&=&1+a_1(1-q)t+(a_1^2(q^2-q)+a_2(1-q^2))t^2+(a_1^3(q^2-q^3)\nonumber\\
&-&a_1a_2(q^2+q-2q^3)+a_3(1-q^3))t^3.
\end{eqnarray}
Further, the regularity condition~(\ref{regularity condition 2}) imposes that the leading coefficient of (\ref{equation A (Delta alpha q)}) is not zero.
%\centerline{ $a_1^3(q^2-q^3)-a_1a_2(q^2+q-2q^3)+a_3(1-q^3)\neq0$.}
Factorize (\ref{equation A (Delta alpha q)}), in the form
\\
\centerline{
$
\frac{A(t)}{A(qt)}=(1-\rho t)(1-\lambda t)(1-\mu t),$
}
we deduce that 
\begin{equation}\label{exp A al case}
A(t)=(\rho t;q)_\infty(\lambda t;q)_\infty(\mu t;q)_\infty=\frac{1}{e_q(-\rho t;q)e_q(-\lambda t;q)e_q(-\mu t;q)}.
\end{equation}
In this case, we obtain the $q$-Appell 2-OPS discovered in~\cite{zaghouani2005some}, which can be viewed as an extension of AlSalam-Carlitz orthogonal PS.
\paragraph{$(\mathcal{B}2):\ \Delta_n=(\alpha+\gamma n)x_1^n+\beta x_2^n$, $x_1\neq x_2$ and $\gamma\neq0$}.
\paragraph*{$(\mathcal{B}{21}):\ x_1=q\neq1$}.
Substitute $\Delta_n$ into (\ref{th}), we obtain
\begin{eqnarray*}
\frac{A'(t)}{A(qt)}&=&a_1+(2a_2-a_1^2q)t+(a_1^3q^2-a_1a_2(q^2+q)+3a_3)t^2:=Q_2(t)\\
\frac{A(t)}{A(qt)}&=&1+(1-q)\Bigl[a_1t+(-qa_1^2+a_2(1+q))t^2\\
&+&(a_1^3q^2-a_1a_2(q+2q^2)+a_3(1+q+q^2)\Bigr]t^3=:Q_1(t).
\end{eqnarray*}
Combining the two above equations, we get $A(t)=\exp(\int\frac{Q_2(t)}{Q_1(t)}dt)$, which cannot be a solution of the above system.
\paragraph*{$(\mathcal{B}{22}):\ x_1=1$ and $\beta\neq0$.}
Replace $\Delta_n$ in (\ref{th}), we obtain
\begin{align*}
\frac{A'(t)}{A(t)}=&a_1+(2a_2-a_1^2)t+(3a_3-3a_1a_2+a_1^3)t^2&\\
\frac{A(t)}{A(x_2t)}=&1+(1-x_2)( a_1t+(-x_2a_1^2+a_2(1+x_2))t^2,&\\
+&\Bigl(a_1^3x_2^2-a_1a_2(x_2+2x_2^2)+a_3(1+x_2+x_2^2)\Bigr)t^3,&
\end{align*}
which is impossible.
\paragraph*{$(\mathcal{B}{23}): \ x_1=1$ and $\beta=0$}.
Substitute $\Delta_n$ in (\ref{th}), we obtain
\begin{align}
\frac{A''(t)}{A(t)}=&2a_2+(6a_3-2a_1a_2)t& \label{eq2}\\
\frac{A'(t)}{A(t)}=&a_1+(2a_2-a_1^2)t+(3a_3-3a_1a_2+a_1^3)t^2&\label{eq1}
\end{align}
The solution of (\ref{eq1}) is
\begin{equation}\label{eq3}
A(t)=\exp(1+a_1t+(2a_2-a_1^2)\frac{t^2}{2}+(3a_3-3a_1a_2+a_1^3)\frac{t^3}{3}).
\end{equation}
Combining (\ref{eq2}) and (\ref{eq3}), we conclude that $3a_3-3a_1a_2+a_1^3=0=2a_2-a_1^2=0$,\\
which means  that 1 is a double root of $(a_3-2a_1a_2+a_1^3)m^2+(a_3-a_1a_2)m+a_3=0$.
Therefore $(a_3-2a_1a_2+a_1^3)\Delta_n+(a_3-a_1a_2)\Delta_{n-1}+a_3\Delta_{n-2}=0$, which is in contradiction with
the regularity condition (\ref{regularity condition 2}).
\paragraph{$(\mathcal{B}3):\ \Delta_n=(\alpha+\beta n+\gamma n^2)x_1^n$ and $\gamma\neq0$}.
\paragraph*{$(\mathcal{B}{31}):\ x_1\neq1$.}
In much the same manner as in case \textbf{$(\mathcal{B}{21})$}, we deduce that there is no solution in this case. 
\paragraph*{$(\mathcal{B}{32}):\ x_1=1$.}\ \\
$\triangleright$ \textit{Determination of $B(t)$ and $A(t)$.}\\
We have 
$\Delta_n=\alpha+\beta n+\gamma n^2,\ b_n=(\frac{3}{\gamma})^n{1\over n!(\lambda)_n(\mu)_n}$, and \\
$r_n=\alpha(n+1)+\beta{n(n+1)\over2}+\gamma{n(n+1)(2n+1)\over6}
$,
where $\lambda\mu=\frac{3\alpha}{\gamma}$ and $\lambda+\mu=-\frac{3\beta+\gamma}{2\gamma}$.\\
So that we obtain
\begin{equation}\label{exp B last case}
 B(t)=
\hypergeom{0}{2}{-}{\lambda,\mu}{\frac{3t}{\gamma}} .
\end{equation}
Substitute $\Delta_n=\alpha+\beta n+\gamma n^2$ into (\ref{th}), we obtain
 \begin{eqnarray*}
 \frac{A'(t)}{A(t)}&=&a_1+(2a_2-a_1^2)t+(3a_3-3a_1a_2+a_1^3)t^2\\
 \frac{A''(t)}{A(t)}&=&2a_2+(6a_3-2a_1a_2)t,\ 
 \frac{A^{(3)}(t)}{A(t)}=6a_3.
\end{eqnarray*} 
We have a solution if and only if $a_2=\frac{1}{2}a_1^2$, $a_3=\frac{1}{6}a_1^3$. 
Finally, we conclude that
\begin{equation}\label{exp A last case}
A(t)=\exp(a_1t).
\end{equation}
The obtained PS agrees with the 2-OPS of the Laguerre type defined and investigated in~\cite{cheikh2000classical}.
%\end{description}
\subsubsection*{Concluding remarks:} In this paper, we characterize and classify $d$-OPS of Brenke type and we give all GF of such families. We recover some new and known sequences of polynomials.\\
In a forthcoming investigation, we will benefit from the present results to go more deeply into the properties of the obtained polynomials. We will focus on the "new PS" among them (essentially, the 2-OPS of Chihara type obtained in case $\mathcal{B}1311$). For instance, the third-order recurrence relation, the functional vector of 2-orthogonality, a differential or a difference equation, and some expansion formulas among others). \\
On the other hand, we will be interested in the Geronimus problem described at the beginning of the paper in the framework of $d$-orthogonality.
%\bibliographystyle{amsplain}
%\bibliography{bibliogahamiP3}
\providecommand{\bysame}{\leavevmode\hbox to3em{\hrulefill}\thinspace}
\providecommand{\MR}{\relax\ifhmode\unskip\space\fi MR }
% \MRhref is called by the amsart/book/proc definition of \MR.
\providecommand{\MRhref}[2]{%
  \href{http://www.ams.org/mathscinet-getitem?mr=#1}{#2}
}
\providecommand{\href}[2]{#2}

\end{document}